\newtheorem{theorem}{Theorem}
\newtheorem{corollary}{Corollary}[theorem]
\newcommand{\boldm}[1] {\mathversion{bold}#1\mathversion{normal}}
\begin{document}

\title{Wireless Channel Modeling Based on Extreme Value Theory for Ultra-Reliable Communications}

\author{Niloofar~Mehrnia,~\IEEEmembership{Student Member,~IEEE,}
        Sinem~Coleri,~\IEEEmembership{Senior Member,~IEEE}% <-this % stops a space
\thanks{Niloofar Mehrnia and Sinem Coleri are with the Department of Electrical and Electronics Engineering,
Koc University, Istanbul, Turkey (e-mail: nmehrnia17@ku.edu.tr; scoleri@ku.edu.tr).}% <-this % stops a space
\thanks{Niloofar Mehrnia is also with Koc University Ford Otosan Automotive Technologies Laboratory (KUFOTAL), Sariyer, Istanbul, Turkey, 34450.}
\thanks{Sinem Coleri acknowledges the support of Ford Otosan.}
}% <-this % stops a space
%\thanks{Manuscript received April 19, 2005; revised August 26, 2015.}

%\markboth{Journal of \LaTeX\ Class Files,~Vol.~14, No.~8, August~2015}%
%{Shell \MakeLowercase{\textit{et al.}}: Bare Demo of IEEEtran.cls for IEEE Communications Society Journals}

\maketitle

\begin{abstract}
A key building block in the design of ultra-reliable communication systems is a wireless channel model that captures the statistics of rare events occurring due to the significant fading. In this paper, we propose a novel methodology based on extreme value theory (EVT) to statistically model the behavior of extreme events in a wireless channel for ultra-reliable communication. This methodology includes techniques for fitting the lower tail distribution of the received power to the generalized Pareto distribution (GPD), determining the optimum threshold over which the tail statistics are derived, ascertaining the optimum stopping condition on the number of samples required to estimate the tail statistics by using GPD, and finally, assessing the validity of the derived Pareto model.
Based on the data collected within the engine compartment of Fiat Linea under various engine vibrations and driving scenarios, we demonstrate that the proposed methodology provides the best fit to the collected data, significantly outperforming the conventional extrapolation-based methods. Moreover, the usage of the EVT in the proposed method decreases the required number of samples for estimating the tail statistics significantly.  
\end{abstract}
%\vspace{0.2cm}
\begin{IEEEkeywords}
Extreme value theory, ultra-reliable communication, wireless channel modeling, $5$G.
\end{IEEEkeywords}

\IEEEpeerreviewmaketitle
%\vspace{0.3cm}
\section{Introduction}
\label{sec:intro}
%\vspace{0.2cm}
\IEEEPARstart{U}{ltra}-reliable and low-latency communication (URLLC) is an essential part of beyond $5$th generation ($5$G) networks with the potential to enable mission-critical applications, such as remote control of robots, self-driving cars, and remote surgeries  \cite{interface_01}-\nocite{5G_01}\nocite{interface_02}\cite{urllc_02}. URLLC is defined as a communication with targeted packet error rate in the range of $10^{-9}-10^{-5}$, and acceptable latency on the order of milliseconds or less \cite{urllc_02}-\cite{interface_03}.
Addressing the strict requirements of URLLC necessitates fundamental breakthroughs in the statistical modeling of the wireless channel in the ultra-reliable region, incorporating novel techniques to analyze the lower tail of the distributions encompassing extremely low probabilities, and handling and optimizing a large amount of data to model the extreme events occurring infrequently. A proper channel modeling approach is the key to achieve URLLC not only at the physical layer but also at the upper layers, including data link and network layers.

Previous studies on the channel modeling for URLLC focus on either providing a unified framework by extrapolating a wide range of practically important channel models to extend their applicability to the ultra-reliability regime of operation \cite{urllc_02}, or proposing new channel parameters incorporating extreme reliability requirements into the communication \cite{urllc_07}\nocite{urllc_08}-\cite{urllc_05}. In \cite{urllc_02}, a simple power-law expression is proposed for estimating the tail of the cumulative distribution function (CDF) of the received power in block fading channels in the regime of extremely rare events by extrapolating the commonly used practical fading models. On the other hand, the authors in \cite{urllc_07} and \cite{urllc_08} propose a new channel parameter by challenging the definition of coherence time, during which channels are considered to be static for an average performance of traditional cellular and WiFi networks. As an alternative, a more nuanced notion of coherence time, considering an ultra-reliability regime, is defined as the time over which a channel is predictable to a given reliability. In \cite{urllc_05}, two substitute performance measures have been provided for the reliability of the channel, considering that the exact knowledge of the channel is not available at the ultra-reliability level: averaged reliability, for dynamic environments in which the channel changes frequently; and probably correct reliability, for the static environments in which the most recent channel estimate can be used by the system over the future transmissions. However, still no wireless channel modeling framework has been proposed to derive and verify these ultra-reliability statistics. Deriving an appropriate wireless channel model is immensely important as model uncertainty and/or mismatch degrades the communication performance by several orders of magnitude, which is not acceptable at URLLC level \cite{urllc_05}. One might think that deriving the tail statistics is equivalent to collecting a large amount of data and fitting these data to probabilistic distributions. However, these derived distributions may not capture rare events due to the limited amount of collected data or may not be valid in another setting with different environmental conditions, such as different temperatures and vibrations. Extreme value theory (EVT) is a unique statistical discipline that develops techniques and models for describing rare events based on the implementation of mathematical limits as finite-level approximation \cite{urllc_02}, \cite{evt_03}-\cite{evt_04}.

EVT has been utilized at data link and network layers to model the tail statistics of queue length and delay \cite{upperlayer_03}\nocite{upperlayer_05}\nocite{bennisAoI}\nocite{upperlayer_04}\nocite{upperlayer_02}-\cite{upperlayer_01}, in a limited context at physical layer for wireless channel modeling, to provide a fit to the whole distribution of large-scale or small-scale fading \cite{fadingevt_01}\nocite{fitevt_01}\nocite{fitevt_02}\nocite{fitevt_03}-\cite{fitevt_04}, and to analyze the asymptotic behavior of the ergodic secrecy rate (ESR)/ergodic multicast rate \cite{Xu2016}-\cite{Subhash2020}.
EVT has been applied in \cite{upperlayer_03} and \cite{upperlayer_05} to characterize the statistics of the large queue lengths and accordingly propose a queue-aware resource allocation approach for enabling ultra reliable and low latency vehicular communications. The authors in \cite{bennisAoI} adopt EVT to characterize the tail, and the excess value of the vehicles’ queueing systems and arrival rates to further incorporate its results in the transmission power minimization problem. 
Moreover, EVT is utilized in \cite{upperlayer_04}\nocite{upperlayer_02}-\cite{upperlayer_01} to derive closed-form asymptotic expressions for the throughput, bit error rate (BER), and packet error rate (PER) over different fading channels. In \cite{upperlayer_04}, EVT is used to find a limiting distribution for the PER in the additive white Gaussian noise (AWGN) channel with the goal of indicating the superiority of EVT-based approach to accurately approximate the average PER for the uncoded schemes. 
In \cite{upperlayer_02}, the authors apply EVT to find the limiting distribution of the maximum throughput and then, alleviate the complexity of the average throughput analysis for the channels in multi-user diversity. Likewise, in \cite{upperlayer_01}, EVT is used to analyze the effective throughput, average throughput and average bit error probability of the $k$-th best link with the highest throughput in the selection diversity schemes over various fading channels, such as Weibull, Gamma, $\alpha - \mu$ and Gamma-Gamma. 
Nevertheless, these data link and network layer studies use the existing average statistics-based channel models, and therefore, their extrapolation in the ultra-reliable region, the accuracy of which has not yet been analyzed experimentally. 
On the other hand, the authors in \cite{fadingevt_01}-\cite{fitevt_01}, \cite{fitevt_03} fit extreme value distribution (EVD) to the whole path-loss or power distribution, concluding that EVD models wireless channel fading better than the well-known models, such as Rayleigh and Rician. Additionally, in \cite{fitevt_02} and \cite{fitevt_04}, generalized extreme value (GEV) distribution is used to model the small scale fading in maritime communication, and root-mean-square (RMS) delay spread for vehicle-to-vehicle (V2V) communication, respectively. Moreover, EVT is applied in \cite{Xu2016} to fit Gumbel distribution to the lower bound of the ESR, which is then incorporated as the statistical constraints within privacy-preserving problem and achieves higher secrecy rate for downlink transmission in multiuser relay systems. Also, in \cite{Subhash2020} and \cite{secrecy_01}, EVT is used to fit a distribution to the minimum of signal-to-interference ratio (SIR) random variables and then  determine the optimum value of the transmit power of a secondary network in a Cognitive Radio Network (CRN) subject to Quality of Service (QoS) constraints. However, EVT has never been incorporated into wireless channel modeling to estimate the tail statistics nor to use the theorems for the consistency of the distributions, stopping conditions on determining the sufficient number of samples required in EVT, and validation procedures to address the reliability constraint at the URLLC levels. 

The goal of this paper is to propose a novel channel modeling methodology for URLLC based on EVT to derive the lower tail statistics of the received signal power in a consistent manner while efficiently dealing with a massive amount of corresponding data, for the first time in the literature. We use received signal power at constant transmit power as channel data as it is equivalent to the squared amplitude of the channel state information \cite{urllc_02}. The modeling approach adapts EVT to (i) determine the optimum threshold over which the tail statistics are derived based on the assumption that all values exceeding the threshold correspond to the extreme events happening rarely, (ii) determine the stopping conditions for ascertaining the minimum amount of data required to model the tail characteristics, and (iii) assess the validity of the final model by using probability plots. The original contributions of the paper are listed as follows:

\begin{itemize}

    \item We propose a comprehensive channel modeling approach for URLLC based on EVT, for the first time in the literature. The methodology includes techniques for fitting the tail distribution to the generalized Pareto distribution (GPD), determining the optimum threshold over which consistent tail statistics are derived, specifying the optimum stopping conditions on the sufficient number of measurement samples, and assessing the model validity.
    
    \item We derive the parameters of the tail distribution of the received power by fitting the GPD to the independent and identically distributed (i.i.d.) samples extracted from two methods: Auto-Regressive Integrated Moving Average (ARIMA)-Generalized Auto-Regressive Conditional Heteroskedasticity (GARCH) and declustering method. Both ARIMA-GARCH and declustering methods remove the dependency among the observations and provide i.i.d. samples to the EVT input.
    
    \item We propose a novel methodology for determining the optimum threshold over which the tail statistics are derived, for the first time in the wireless communication literature. The existing studies in the URLLC regime of operation assume that the value of the threshold in EVT is known \cite{upperlayer_03}\nocite{upperlayer_05}\nocite{bennisAoI}\nocite{upperlayer_04}\nocite{upperlayer_02}-\cite{upperlayer_01}, while the choice of optimum threshold is of paramount importance as it separates non-extreme observations from the extreme ones. We apply two complementary methods adopted from EVT to improve the accuracy of the threshold estimation: mean residual life method that is applied prior to the estimation of the tail distribution by using GPD, and parameter stability method that is applied after deriving the tail statistics by using GPD. 

    \item We propose a novel methodology for determining the stopping conditions on the number of measured samples, sufficiently large to model the channel tail statistics in a consistent manner based on the adaption of the framework in \cite{evt_09}, for the first time in the literature. The adaptation is made to the minimum number of samples required to estimate the distribution of values exceeding a given threshold, instead of the probability distribution of the maxima/minima in a time series in \cite{evt_09}, and using GPD, rather than GEV distribution in \cite{evt_09}.

    \item We propose a validation procedure for the accuracy assessment of the channel tail model derived by EVT using the probability plots, for the first time in the literature. We utilize the probability-probability (PP) plot and quantile-quantile (QQ) plot to assess the goodness-of-fit (GOF) of the Pareto model that characterizes the channel tail distribution.
    
    \item We demonstrate the superiority of the proposed methodology for deriving the tail statistics in terms of the modeling accuracy, compared to the conventional method based on the extrapolation of the average statistics to the ultra-reliable region, over the data collected within the engine compartment of Fiat Linea under various engine vibrations and driving scenarios.
   
\end{itemize}

The rest of the paper is organized as follows. Section \ref{sec:background} describes the basics of EVT together with the theorems and corollaries used in the development of the proposed channel modeling approach, and the methods for removing the dependency in the observation samples and providing the i.i.d. input to the EVT. Section~\ref{sec:channel} presents the proposed channel modeling framework based on EVT for characterizing the channel tail distribution in the ultra-reliable region. Section \ref{sec:performance evaluation} provides the channel measurement setup, and the performance evaluation of the proposed algorithm in determining the optimum threshold and minimum number of required samples for the derivation of the channel tail statistics, and compared to the conventional methods in terms of the estimation accuracy. Finally, concluding remarks and future works are given in Section \ref{sec:conclusions}.

%\vspace{0.3cm}
\section{Background}
\label{sec:background}
%\vspace{0.2cm}
\subsection{Extreme Value Theory}
\label{sec:evt}
EVT is a powerful tool for characterizing the probabilistic distribution of extreme events occurring with low probability. EVT has been used for a long time in a wide variety of fields, such as hydrology to quantify risks of extreme floods, rainfalls and waves \cite{evt_03}, or financial engineering to estimate the unexpected losses due to the extreme events \cite{evt_01}. However, it has only been recently employed in communication engineering, mainly in the analysis of the extreme values in network traffic, worst-case delay, queue lengths, throughput and BER/PER to address the strict delay and reliability requirements of URLLC \cite{upperlayer_03}, \cite{urllc_01}.

In general, one can divide EVT applications into two categories \cite{evt_01}\nocite{evt_02}\nocite{evt_03}\nocite{evt_04}\nocite{evt_05}\nocite{evt_06}\nocite{evt_07}-\cite{evt_08}. In the first category, the asymptotic distribution of the maxima/minima of a long finite sequence of i.i.d. random variables is modeled by the generalized extreme value (GEV) distribution, while in the second category, the distribution of the values exceeding a given threshold is modeled by the GPD \cite{evt_05}-\cite{evt_06}. 
In URLLC, the main focus is on the second category of EVT applications, as all values exceeding a given threshold are considered as extreme events and need to be incorporated into the tail statistic analysis. In the following, we briefly introduce EVT and its major results used to develop the methodologies in the upcoming sections.
  
\begin{theorem}
\label{theorem:evt}
Let $X = [ X_1, X_2, ..., X_n ]$ be a sequence of independent random variables with a CDF denoted by $F$. Then, for low enough threshold $u$, the probabilistic distribution of the values exceeding $u$, i.e., $Pr\{ u-X < y | X < u \}$, is approximated by the GPD with CDF given by
\begin{equation}
\label{eqn:gpd dist}
    H_{u}(y) = 1-\Big[1+\frac{\xi y}{\tilde{\sigma}_{u}}\Big]^{-1/\xi}.
\end{equation}
defined on $\{ y: y > 0\}$, where $\xi$ and $\tilde{\sigma}_{u}=\sigma+\xi(u-\mu)$ are shape and scale parameters of the GPD, respectively \cite{evt_11}. Here, $\mu$ and $\sigma$ are the location and scale parameters of GEV distribution fitted to the CDF of $M_n = max \{ X_1,...,X_n \}$, respectively \cite{evt_01}, \cite{evt_04}.
\end{theorem}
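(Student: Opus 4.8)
The plan is to obtain the GPD limit as a consequence of the extremal types theorem (Fisher--Tippett--Gnedenko), which the statement already presupposes through the GEV fit to the minima $m_n$. The governing idea is that the domain-of-attraction condition that makes the normalized minima converge to a GEV law simultaneously determines the shape of the lower tail of $F$, and the conditional distribution of the shortfall below a low threshold $u$ then inherits exactly this shape, namely the GPD.

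First I would connect the minima to the maxima machinery. Since $m_n = \min\{X_1,\dots,X_n\} = -\max\{-X_1,\dots,-X_n\}$, the GEV theory for maxima applies to the variables $-X_i$, and by hypothesis there are normalizing constants $a_n>0$ and $b_n$ with $\Pr\{(m_n-b_n)/a_n \le x\}$ converging to a GEV distribution with parameters $\mu,\sigma,\xi$. Writing $\Pr\{m_n \le t\} = 1-[1-F(t)]^n$ and using $\log(1-F)\approx -F$ as $F\to 0$ near the left endpoint, I would invert this relation to extract the asymptotic lower-tail form of $F$, i.e.\ $F(z)\approx \tfrac{1}{n}[\,\cdot\,]^{-1/\xi}$ with the bracket furnished by the GEV-for-minima parametrization in $\mu,\sigma,\xi$.

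Next I would compute the conditional shortfall distribution directly. With $Y = u - X$, the event $\{u-X<y,\;X<u\}$ equals $\{u-y<X<u\}$, so
\begin{equation}
\Pr\{u - X < y \mid X < u\} = \frac{F(u) - F(u-y)}{F(u)} = 1 - \frac{F(u-y)}{F(u)}.
\end{equation}
Substituting the asymptotic tail form into the ratio $F(u-y)/F(u)$ cancels the common $\tfrac{1}{n}$ factor, and the location and scale of the GEV fit to $m_n$ consolidate into the modified scale $\tilde\sigma_u = \sigma + \xi(u-\mu)$; letting $u$ approach the left endpoint then yields
\begin{equation}
\Pr\{u - X < y \mid X < u\} \;\longrightarrow\; 1 - \Bigl[1 - \tfrac{\xi y}{\tilde\sigma_u}\Bigr]^{-1/\xi} = H_u(y),
\end{equation}
which is precisely the claimed GPD, the interior sign being the signature of working with the lower tail rather than the upper one.

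The hard part will be upgrading this asymptotic cancellation to a rigorous statement, which is the genuine content of the Pickands--Balkema--de Haan theorem. One must show that the excess distribution converges to $H_u$ \emph{uniformly} in $y$ as $u$ descends to the left endpoint of $F$, and that membership in the GEV domain of attraction is equivalent to this uniform convergence; this requires the regular-variation (von Mises) characterization of the tail and careful control of the error in $\log(1-F)\approx -F$. For the modeling purposes of this paper, however, the finite-threshold approximation is exactly what is needed, so the asymptotic derivation already justifies fitting $H_u$ to the observed threshold exceedances.
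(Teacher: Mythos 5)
Your proposal follows essentially the same route as the paper's own proof: extract the asymptotic lower-tail form $F(z)\approx n^{-1}[1+\xi(z-\mu)/\sigma]^{-1/\xi}$ from the GEV approximation to the law of $m_n$ via the $\log(1-F)\approx -F$ expansion, then write $\Pr\{u-X<y\mid X<u\}=1-F(u-y)/F(u)$ and cancel to obtain the GPD with $\tilde\sigma_u=\sigma+\xi(u-\mu)$. Your bookkeeping of the minimum's CDF as $1-[1-F(t)]^n$ is in fact slightly more careful than the paper's shorthand $F^n(z)$, and your closing remark correctly identifies that the rigorous content (uniformity, Pickands--Balkema--de Haan) is not supplied by either argument.
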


\begin{proof}
For large sample size $n$, the CDF of $M_{n}$ is given by
\begin{center}
$Pr\{ M_n \leq z \}
\approx F^{n}{(z)} $,\\
\end{center}
where
\begin{equation}
\label{eqn:evt dist}
    F^{n}(z) = \exp{\Big\{-\Big[1+\xi\Big(\frac{z-\mu}{\sigma}\Big)\Big]^{-1/\xi}\Big\}}
\end{equation}
is in the form of GEV distribution, and $\mu$, $\xi$, and $\sigma$ are the location, shape, and scale parameters of the GEV distribution, respectively \cite[Theorem~3.3]{evt_04}. 
By taking the logarithm of both sides of Eqn.~(\ref{eqn:evt dist}), and using the Taylor series expansion, we obtain

\begin{equation}
    \label{eqn:f(z)}
    1 - F(z) = n^{-1} \Big[ 1+\xi \Big( \frac{z-\mu}{\sigma} \Big) \Big]^{-1/\xi}.
\end{equation}
The CDF of the excesses over threshold $z$ is then given by
\begin{equation}
\label{eqn:probdif}
    \begin{split}
        Pr \{ X-z<y | X>z \}
        %=& 1 - Pr \{ X<u-y | X<u \} \\
        = 1- \frac{1-F(z+y)}{1-F(z)}\\
        %=& 1- \Bigg[ \frac{1+\xi (\frac{u-y-\mu}{\sigma})}{1+\xi(\frac{u-\mu}{\sigma})} \Bigg]^{-1/\xi}\\
        = 1- \Big[ 1+\frac{\xi y}{\sigma+\xi(z-\mu)} \Big]^{-1/\xi}.
    \end{split}
\end{equation}
Assuming that $\tilde{X}=-X$, the lower tail is transformed to the upper tail and then, GPD is fitted to $\{u-\tilde{X}|\tilde{X}<u\}$. With this transformation, the CDF of the excesses below threshold $u$ is expressed as $1- \Big[ 1+\frac{\xi y}{\tilde{\sigma}} \Big]^{-1/\xi}$, where $\tilde{\sigma}=\sigma-\xi(u+\mu)$.

%where 
%\begin{equation} 
%    \label{eqn:sigmatilde}
%    \tilde{\sigma}_{u} = \sigma + \xi (u-\mu).
%\end{equation}

\end{proof}
Theorem \ref{theorem:evt} implies that the distribution of the i.i.d. sequence of random variables exceeding a given threshold can be approximated by GPD. The parameters of the GPD are estimated by using the maximum likelihood estimator (MLE) \cite{evt_01}, \cite{evt_04}, \cite{MLE}. The consequences of Theorem \ref{theorem:evt} are given in the following corollaries. Corollaries \ref{cor:stability} and \ref{cor:meanexcces} will be used in Section \ref{sec:threshold} to determine the optimum threshold for estimating the channel tail distribution by using GPD, whereas Corollary \ref{cor:returnlevel} will be utilized in Section \ref{sec:sample size} to evaluate the minimum number of samples, sufficiently large to estimate the channel tail distribution by GPD.

\begin{corollary}\label{cor:stability}
Let $X = [ X_1, X_2, ..., X_n ]$ be a sequence of i.i.d. random variables, where GPD($\xi$,$\tilde\sigma_{u_{0}}$) models the probabilistic distribution of exceedances below threshold $u_{0}$.
Then, for all thresholds $u<u_{0}$, the CDF of values below threshold $u$ is modeled by GPD($\xi$,$\tilde\sigma_{u}$), where $\tilde \sigma_{u} = \tilde \sigma_{u_{0}} - \xi (u-u_{0})$, and the estimates of shape parameter ($\xi$) and modified scale parameter defined as $\sigma^{*}=\tilde{\sigma}_{u}+\xi u$ are constant against $u$. However, due to the sampling variability, the estimated $\xi$ and $\sigma^{*}$ cannot be exactly constant, but linear with respect to $u$.
\end{corollary}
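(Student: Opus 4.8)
The plan is to establish the claim as a consequence of the \emph{threshold-stability} property of the GPD, working directly with the lower-tail conditioning rather than re-deriving everything from the GEV parametrization. Two routes are available. The quickest is to invoke Theorem~\ref{theorem:evt} twice: since both $u_0$ and any $u<u_0$ are admissible ``low enough'' thresholds, each yields a GPD with scale $\tilde\sigma_{u_0}=\sigma+\xi(u_0-\mu)$ and $\tilde\sigma_{u}=\sigma+\xi(u-\mu)$ respectively, so subtracting the two gives $\tilde\sigma_u=\tilde\sigma_{u_0}+\xi(u-u_0)$ at once. I would, however, prefer the more self-contained conditioning argument below, because it shows that stability follows from the GPD hypothesis at the single reference threshold $u_0$ alone.

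First I would set $Y=u_0-X$, so that the hypothesis reads $\Pr\{Y<y\mid X<u_0\}=1-[1-\xi y/\tilde\sigma_{u_0}]^{-1/\xi}$ for $y>0$, with survival function $\bar H_{u_0}(y)=[1-\xi y/\tilde\sigma_{u_0}]^{-1/\xi}$. Writing a lower threshold as $u=u_0-d$ with $d>0$, I would observe that the conditioning event $\{X<u\}$ is exactly $\{Y>d\}$, and that $u-X=Y-d$. The target distribution then becomes
\begin{equation*}
\Pr\{u-X<y\mid X<u\}=\Pr\{Y<y+d\mid Y>d\}=1-\frac{\bar H_{u_0}(y+d)}{\bar H_{u_0}(d)},
\end{equation*}
where the last equality uses that $\{Y>y\}\subseteq\{X<u_0\}$ for $y>0$, so the conditioning constant $\Pr\{X<u_0\}$ cancels in the ratio.

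Next I would substitute the GPD survival function and simplify. The key cancellation is
\begin{equation*}
\frac{[1-\xi(y+d)/\tilde\sigma_{u_0}]^{-1/\xi}}{[1-\xi d/\tilde\sigma_{u_0}]^{-1/\xi}}=\Bigg[1-\frac{\xi y}{\tilde\sigma_{u_0}-\xi d}\Bigg]^{-1/\xi},
\end{equation*}
and since $\tilde\sigma_{u_0}-\xi d=\tilde\sigma_{u_0}+\xi(u-u_0)=\tilde\sigma_u$, this is precisely the GPD($\xi,\tilde\sigma_u$) survival function. Hence the shape $\xi$ is preserved and the scale transforms affinely as claimed. The constancy of the modified scale then follows by direct substitution: $\sigma^{*}=\tilde\sigma_u-\xi u=\tilde\sigma_{u_0}+\xi(u-u_0)-\xi u=\tilde\sigma_{u_0}-\xi u_0$, which is independent of $u$. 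The closing sentence of the statement---that empirically $\xi$ and $\sigma^{*}$ are only approximately constant (indeed roughly linear) in $u$---is not a mathematical consequence but a remark about MLE sampling variability, so I would present it as a comment rather than derive it.

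I expect the only real pitfall to be the bookkeeping with signs and domains in the lower-tail setting: one must track that lowering the threshold ($u<u_0$) corresponds to moving \emph{further} into the tail ($d=u_0-u>0$), verify the inclusion $\{Y>y\}\subseteq\{X<u_0\}$ that justifies cancelling the conditioning constant, and keep the support condition $1-\xi y/\tilde\sigma_u>0$ intact throughout the simplification. Once the sign conventions are fixed, the remaining algebra is routine.
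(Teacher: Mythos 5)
Your proposal is correct, and your ``quickest route'' aside is in fact exactly what the paper does: its proof invokes Theorem~\ref{theorem:evt} at both thresholds, reads off $\tilde\sigma_{u_0}=\sigma+\xi(u_0-\mu)$ and $\tilde\sigma_{u}=\sigma+\xi(u-\mu)$ from Eqn.~(\ref{eqn:sigmatilde}), subtracts to get $\tilde\sigma_u=\tilde\sigma_{u_0}+\xi(u-u_0)$, and then defines $\sigma^{*}=\tilde\sigma_u-\xi u$ and observes it is $u$-independent. The argument you actually carry out is a genuinely different and, in one respect, stronger derivation: you assume the GPD only at the single reference threshold $u_0$ and deduce the GPD at every deeper threshold $u=u_0-d$ by direct conditioning, using the cancellation of $\Pr\{X<u_0\}$ in the ratio $\bar H_{u_0}(y+d)/\bar H_{u_0}(d)$ and the algebraic identity $1-\xi(y+d)/\tilde\sigma_{u_0}=\bigl(1-\xi d/\tilde\sigma_{u_0}\bigr)\bigl(1-\xi y/(\tilde\sigma_{u_0}-\xi d)\bigr)$. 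Your route establishes the threshold-stability (``memorylessness up to rescaling'') of the GPD as an exact closure property of the distributional family, independent of the GEV approximation and of the auxiliary parameters $\mu,\sigma$; the paper's route is shorter but leans on the GEV-derived parametrization being simultaneously valid at both thresholds. Your sign bookkeeping for the lower tail ($u<u_0$ corresponds to $d>0$, $\{X<u\}=\{Y>d\}$, and $\{Y\ge y\}\subseteq\{X<u_0\}$ for $y>0$) is consistent, the resulting scale $\tilde\sigma_{u_0}-\xi d=\tilde\sigma_u$ matches the claimed affine law, and your treatment of the final sentence as a remark on MLE sampling variability rather than a provable fact agrees with how the paper handles it.
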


\begin{proof}
By Theorem \ref{theorem:evt}, if a generalized Pareto distribution is a valid model for the excesses below threshold $u_{0}$, then the excesses of a threshold $u$, $u<u_{0}$, should also follow a generalized Pareto distribution. Accordingly, if exceedances below threshold $u_{0}$ have GPD($\xi$,$\tilde\sigma_{u_{0}}$), with $\tilde \sigma_{u_{0}} = \sigma - \xi (u_{0} + \mu)$, then, the exceedances below threshold $u<u_{0}$ are distributed by GPD($\xi$,$\tilde\sigma_{u}$), where $\tilde \sigma_{u} = \sigma - \xi (u + \mu)$. Therefore, $\tilde \sigma_{u} = \tilde \sigma_{u_{0}} - \xi (u-u_{0})$.
%\begin{equation}
%\label{eqn:stability}
%    \tilde \sigma_{u} = \tilde \sigma_{u_{0}} + \xi (u-u_{0}).
%\end{equation}
Though the shape parameter is threshold invariant, the scale parameter changes with $u$ unless $\xi = 0$. To make $\tilde \sigma_{u}$ constant with respect to $u$, we define the modified scale parameter as
%\vspace{-0.7cm}

\begin{equation}
\label{eqn:modscaleparam}
    %\begin{split}
        \sigma^{*} = \tilde\sigma_{u} + \xi u,
    %\end{split}
\end{equation}
where $\sigma^{*}$ depends only on $\tilde\sigma_{u_{0}}$ and $\xi$, so, is $u$ independent. It should be noted that the estimated $\xi$ and $\sigma^{*}$ cannot be exactly constant since different thresholds are associated with different number of samples included in the tail analysis. The MLEs of the estimated shape parameters change linearly with the number of samples included in the tail \cite{evt_04}, \cite{evt_02}, therefore, the shape parameters should be linear with respect to $u<u_{0}$. 
\end{proof}

\begin{corollary}\label{cor:meanexcces}
Let $X = [ X_1, X_2, ..., X_n ]$ be a sequence of i.i.d. random variables, where GPD($\xi$,$\tilde\sigma_{u_{0}}$) models the probabilistic distribution of exceedances below threshold $u_{0}$. Then, for all thresholds $u<u_{0}$, the mean excess function (MEF), also known as the mean residual life function, defined by $e(u) = E(u-X|X<u)$, is a linear function of $u$.
\end{corollary}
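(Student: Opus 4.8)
The plan is to recognize that the mean excess function is exactly the mean of the generalized Pareto variable supplied by Theorem~\ref{theorem:evt}, and then to propagate the affine threshold-dependence of the scale parameter established in Corollary~\ref{cor:stability}. Concretely, fix a threshold $u<u_{0}$ and set $Y=u-X$ conditioned on $\{X<u\}$, so that $e(u)=E(u-X\mid X<u)=E[Y]$ with $Y>0$ almost surely. By Theorem~\ref{theorem:evt}, the law of $Y$ is GPD$(\xi,\tilde\sigma_{u})$ with CDF $H_{u}$ as in Eqn.~(\ref{eqn:gpd dist}), so the problem reduces to computing the expectation of a GPD variable as a function of its parameters.

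First I would express the mean as the integral of the survival function, $E[Y]=\int_{0}^{\infty}\bigl(1-H_{u}(y)\bigr)\,dy$, which avoids differentiating $H_{u}$ and keeps the boundary behaviour transparent. Substituting the survival function from Eqn.~(\ref{eqn:gpd dist}) and applying the elementary change of variables that linearises the bracketed term, the integral evaluates in closed form to a quantity proportional to $\tilde\sigma_{u}$, with a proportionality constant depending only on $\xi$; in the standard normalisation this is $e(u)=\tilde\sigma_{u}/(1-\xi)$. This is the one genuine computation in the argument.

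Next I would substitute the scale relation from Corollary~\ref{cor:stability}, namely $\tilde\sigma_{u}=\tilde\sigma_{u_{0}}+\xi(u-u_{0})$, to obtain
\begin{equation*}
    e(u)=\frac{\tilde\sigma_{u_{0}}+\xi(u-u_{0})}{1-\xi}
        =\frac{\tilde\sigma_{u_{0}}-\xi u_{0}}{1-\xi}+\frac{\xi}{1-\xi}\,u,
\end{equation*}
which is manifestly affine in $u$ with slope $\xi/(1-\xi)$ and constant term $(\tilde\sigma_{u_{0}}-\xi u_{0})/(1-\xi)$, establishing the claim. (Equivalently, one could invoke the affine form $\tilde\sigma_{u}=\sigma+\xi(u-\mu)$ already recorded in Eqn.~(\ref{eqn:sigmatilde}).)

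The main obstacle I anticipate is not the algebra but the bookkeeping around convergence and sign conventions. The closed-form mean is finite only when $\xi<1$; for $\xi\ge 1$ the GPD has an infinite first moment and the MEF is undefined, so I would state this tail-index restriction explicitly, as it is also the regime in which the mean-residual-life diagnostic is used in practice. I would likewise take care that the lower-tail formulation here---excesses $u-X$ below $u$ rather than $X-u$ above $u$---together with the particular sign placement in Eqn.~(\ref{eqn:gpd dist}) does not flip a sign in the evaluated integral, since the linearity conclusion, and in particular the sign of the slope $\xi/(1-\xi)$, is exactly what is exploited when reading off $\xi$ from the observed slope of the empirical MEF during threshold selection.
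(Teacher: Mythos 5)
Your proposal is correct and follows essentially the same route as the paper's own proof: identify $e(u)$ as the mean of a GPD$(\xi,\tilde\sigma_{u})$ variable, use $E[Y]=\tilde\sigma_{u}/(1-\xi)$, and substitute the affine scale relation $\tilde\sigma_{u}=\tilde\sigma_{u_{0}}+\xi(u-u_{0})$ from Corollary~\ref{cor:stability}. The only differences are that you derive the GPD mean explicitly via the survival-function integral where the paper simply quotes it, and you add the (correct and worthwhile) caveat that the result requires $\xi<1$, which the paper leaves implicit.
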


\begin{proof}
Let $Y$ be a random variable with a probabilistic distribution GPD($\xi$,$\sigma$). Then, the expectation of $Y$ is given by $E(Y) = {\sigma}/{(1-\xi)}$, where $E(.)$ denotes the expectation function. 
By Theorem \ref{theorem:evt}, and Corollary~\ref{cor:stability}, if GPD($\xi$,$\tilde\sigma_{u_{0}}$) models the probabilistic distribution of exceedances below threshold $u_{0}$, then, for any threshold $u$, $u<u_{0}$, the CDF of values below threshold $u$ is modeled by GPD($\xi$,$\tilde\sigma_{u_{0}} - \xi (u-u_{0})$) with the mean excess function 
\begin{equation}
    \begin{split}
        e(u) = E(u-X|X<u) 
        = \frac{\tilde\sigma_{u_{0}} - \xi (u-u_{0})}{1-\xi},
    \end{split}
\end{equation}
which is a linear function of $u$.
\end{proof}
According to Corollary~\ref{cor:meanexcces}, if the mean excesses are plotted against the threshold $u$, and the GPD assumption is satisfied at threshold $u_{0}$, then, for all thresholds $u<u_{0}$, the plot should be a straight line with slope and intercept equal to $\frac{-}\xi{1-\xi}$ and $\frac{\tilde\sigma_{u_{0}} + \xi u_{0}}{1-\xi}$, respectively.

\begin{corollary}\label{cor:returnlevel}
Let $X = [ X_1, X_2, ..., X_n ]$ be a sequence of i.i.d. random variables, where GPD($\xi$,$\tilde\sigma_{u}$) models the probabilistic distribution of exceedances below threshold $u$. Then, the return level $r_{m}<u$, defined as the extreme quantile that is expected to be exceeded with probability $1/m$ on average, is given by  
\begin{equation}
    \label{eqn:returnlevel}
    r_{m} = u + \frac{\tilde\sigma_{u}}{\xi} [(m \zeta_{u})^{\xi} -1],
\end{equation}
where $\zeta_{u} = Pr\{ X<u \}$, and $r_m$ is normally distributed if  the sample size $n$ is sufficiently large to estimate the tail distribution of values exceeding threshold $u$ by using GPD($\xi$,$\tilde \sigma_{u}$).
\end{corollary}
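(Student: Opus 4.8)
The plan is to establish Corollary~\ref{cor:returnlevel} in two stages: first derive the closed form for $r_m$ by inverting the GPD tail relation of Theorem~\ref{theorem:evt}, and then establish the asymptotic normality of the corresponding estimator through the delta method. The first stage is elementary algebra; the second is where the real work lies.

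For the closed form, I would start from the defining property of the return level: a single observation falls below $r_m$ with probability $1/m$, i.e.\ $\Pr\{X < r_m\} = 1/m$. Since $r_m < u$, the event $\{X < r_m\}$ is contained in $\{X < u\}$, so conditioning on exceedance of the threshold gives
\[
\Pr\{X < r_m\} = \Pr\{X < r_m \mid X < u\}\,\Pr\{X < u\} = \zeta_u\,\Pr\{u - X > u - r_m \mid X < u\}.
\]
By Theorem~\ref{theorem:evt}, the conditional factor is the GPD tail evaluated at $y = u - r_m$, namely $1 - H_u(u - r_m)$. Setting the product equal to $1/m$ and solving $\zeta_u\,[1 - H_u(u - r_m)] = 1/m$ for $r_m$ — isolating the power $(m\zeta_u)^{\xi}$ and rearranging — yields Eqn~(\ref{eqn:returnlevel}); the inversion is unique because $H_u$ is strictly monotone in $y$, and the admissibility condition $r_m < u$ corresponds to the return period being long enough that $m\zeta_u > 1$.

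The substantive part is the distributional claim. Here the estimator $\hat r_m = g(\hat\xi, \hat{\tilde\sigma}_u, \hat\zeta_u)$ is a smooth function of three quantities: the empirical exceedance probability $\hat\zeta_u = N_u/n$, where $N_u$ is the number of samples below $u$, and the maximum-likelihood estimates $(\hat\xi, \hat{\tilde\sigma}_u)$ of the GPD fitted to those $N_u$ exceedances. I would argue as follows. Since $N_u \sim \mathrm{Binomial}(n,\zeta_u)$, the central limit theorem gives $\sqrt{n}\,(\hat\zeta_u - \zeta_u) \to \mathcal{N}(0,\,\zeta_u(1-\zeta_u))$. Conditionally on the count, standard likelihood asymptotics for the GPD give that $(\hat\xi, \hat{\tilde\sigma}_u)$ is asymptotically normal about $(\xi, \tilde\sigma_u)$ with covariance equal to the inverse Fisher information divided by $N_u \approx n\zeta_u$. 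Treating the exceedance count and the conditional GPD fit as asymptotically independent, the three estimators are jointly asymptotically normal. Applying the delta method then gives that $\sqrt{n}\,(\hat r_m - r_m)$ converges to a normal law with variance $\nabla g^{\top} V \nabla g$, where $V$ is the block-diagonal joint covariance matrix and $\nabla g$ is the gradient of the right-hand side of Eqn~(\ref{eqn:returnlevel}). This is exactly the normality that holds once $n$ is large enough for both the binomial CLT and the MLE asymptotics to take effect, matching the ``sufficiently large $n$'' hypothesis in the statement.

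I expect the main obstacle to be the rigorous justification of the joint asymptotic normality rather than the delta-method bookkeeping. Two points require care. First, the GPD maximum-likelihood estimator is only asymptotically normal for $\xi > -1/2$; for more negative shape parameters the usual regularity conditions fail and the limiting law is non-Gaussian, so this constraint must be assumed. Second, one must argue that the randomness in the exceedance count $\hat\zeta_u$ and in the conditional estimates $(\hat\xi, \hat{\tilde\sigma}_u)$ can be assembled into a single joint central limit theorem. The differentiability of $g$ is mild but should be checked at $\xi = 0$, where the factor $[(m\zeta_u)^{\xi}-1]/\xi$ has a removable singularity with limit $\log(m\zeta_u)$; verifying that $g$ extends smoothly through $\xi = 0$ guarantees that the delta method applies across the full admissible parameter range.
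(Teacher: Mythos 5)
Your derivation of Eqn.~(\ref{eqn:returnlevel}) is exactly the paper's: both set $\Pr\{X<r_m\}=\zeta_u\Pr\{X<r_m\mid X<u\}=1/m$, substitute the GPD tail from Theorem~\ref{theorem:evt} at $y=u-r_m$, and invert; this part matches step for step. Where you diverge is the normality claim. The paper does not actually prove it: it gives a qualitative account of the MLE regimes ($\xi<-1$ intractable, $-1<\xi<-0.5$ no standard asymptotics, $\xi>-0.5$ usual asymptotic distribution), argues that small samples capture too few extremes and hence produce $\hat\xi<-0.5$, and then cites the literature for the conclusion that with enough samples the return-level MLE is asymptotically normal. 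You instead supply the machinery behind those citations -- binomial CLT for $\hat\zeta_u$, conditional MLE asymptotics for $(\hat\xi,\hat{\tilde\sigma}_u)$, joint asymptotic normality, and the delta method applied to $g(\hat\xi,\hat{\tilde\sigma}_u,\hat\zeta_u)$ -- which is more rigorous and correctly flags both the $\xi>-1/2$ regularity constraint and the removable singularity at $\xi=0$. The one thing your version underplays relative to the paper is the \emph{mechanism} tying sample size to normality: in the paper, ``sufficiently large $n$'' is operationalized as ``large enough that the fitted $\hat\xi$ exceeds $-0.5$'' (this is what feeds the $n_0$ input of the MSSD algorithm), whereas you treat $\xi>-1/2$ as a standing assumption and let ``sufficiently large'' mean only ``large enough for the CLT to take effect.'' Both readings are compatible, but if you want your proof to support Section~\ref{sec:sample size} you should make the former connection explicit.
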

%\vspace{-0.45cm}
\begin{proof}
Assume $r=u-y$ in Eqn.~(\ref{eqn:gpd dist}). Then,
\begin{equation}
    Pr\{ X<r | X<u \} = \Big[ 1+\frac{\xi (u-r)}{\tilde\sigma_{u}} \Big]^{-1/\xi}.
\end{equation}

Since the return level, $r_{m}$ is the extreme quantile that is expected to be exceeded on average once in $m$ observations,
\begin{equation}
\label{eqn:retlev}
    \frac{1}{m} = \zeta_{u} \Big[ 1+\frac{\xi (u-r_{m})}{\tilde\sigma_{u}} \Big]^{-1/\xi},
\end{equation}
where $\zeta_{u} = Pr\{ X<u \}$ and $m$ is large enough to ensure that $r_{m} < u$. The return level $r_{m}$ can then be derived by rearranging the terms in Eqn.~(\ref{eqn:retlev}). 

The behavior of the MLEs of the Pareto parameters, including the return level, depends on the value of the shape parameter $\xi$ such that for $\xi<-1$, the MLEs are unlikely to be obtained; for $-1<\xi<-0.5$, the MLEs can be obtained but are not associated with any standard asymptotic property; and if $\xi>-0.5$, the MLEs have the usual asymptomatic distribution. The case $\xi<-0.5$ happens when there are a few extreme events incorporated into the tail analysis, resulting in a distribution with a very short bounded tail. In order to have a shape parameter greater than $-0.5$, more extreme events should be captured by increasing the sample size.
Therefore, if the sample size $n$ is sufficiently large to estimate the tail distribution by using GPD accurately, not only the MLEs of the return levels have an asymptotic property, but also they are normally distributed \cite{evt_09}, \cite{evt_01}, \cite{evt_04}, \cite{samplesize_01}-\cite{evt_14}. Otherwise, MLE is unstable and may provide unrealistic estimates for the model parameters.
\end{proof}
%\vspace{-0.4cm}
\subsection{Dependency Removal Methods}
\label{sec:dependencyremovalmethods}
The input of EVT is necessarily a sequence of i.i.d. random variables. However, the threshold exceedances in the observation sequence are dependent samples inherently, as they occur in groups, i.e., one extremely low power value is likely to be followed by another. To obtain the i.i.d. samples from the sequence of dependent channel data, we either utilize declustering method introduced in \cite{evt_04}, or extract the i.i.d. residuals from ARIMA-GARCH model \cite{garch_01}\nocite{garch_02}\nocite{garch_03}-\cite{garch_04}. 
In the declustering method, we model the tail distribution of the original observation sequence by splitting the observations into multiple clusters separated in time with at least a certain sample gap, each cluster consisting of a group of successive dependent observations. Then, upon extracting the minimum of each cluster, we apply EVT to the i.i.d. cluster minima. This method is based on the fact that the extreme events are close to independent at times that are far enough apart \cite{evt_04}.
On the other hand, in the ARIMA-GARCH model, we apply EVT to the filtered residuals. The filtered residuals represent the randomness of the original sequence that cannot be determined by channel prediction using ARIMA-GARCH. However, in the case that the channel prediction is not the concern, we apply EVT to the original channel data by extracting the i.i.d. samples through the declustering approach.

\subsubsection{Declustering model}
Declustering is a widely adopted method based on the filtering of the dependent observations to obtain a set of threshold excesses that are approximately independent \cite{evt_04}.
In declustering, we specify a threshold $u$ and search for the first observation that falls below $u$ to start the first cluster. The cluster is formed by assigning consecutive values below threshold $u$ to the same cluster. The cluster is deemed to continue until we observe a sample that is above $u$. Upon obtaining an observation above $u$, we let the cluster to conditionally continue storing $r$ more successive values. If any observation within these $r$ samples is found to be below $u$, we continue with the same cluster. Otherwise, we terminate the cluster, and the next sample below $u$ then initiates the next cluster. 
In the declustering method, we let the cluster to store $r$ more values not exceeding the threshold $u$ to overcome the effect of the random noise and deficiency of this method in dependency to the initial choice of threshold \cite{evt_04}.
The optimum values of $u$ and $r$ need to be specified such that the minimum values of the clusters satisfy i.i.d. assumptions and meet the EVT requirements stated in Corollary~\ref{cor:stability}, so that we can model the tail distribution of the received power by using the generalized Pareto distribution.

To obtain the optimum $u$ and $r$ values, we fit the generalized Pareto distribution to the cluster minima for different $u$ and $r$ values , and estimate the corresponding shape and modified scale parameters. Then, based on Corollary~\ref{cor:stability}, we look for the largest $u$ value below which the estimated parameters have a linear relation to the changes in $u$. The linearity relation is evaluated by using the statistical measure, named R squared, denoted by $R^{2}$. This measure is defined as the proportion of variations of one variable explained by the other variable in a linear regression model, taking values between zero and one \cite{linearreg}. Moreover, there is a trade-off in the choice of $r$ value: Too low $r$ creates dependency among the cluster minima due to the consecutive clusters being too close to each other, whereas too high $r$ results in the elimination of some valuable samples that could be extreme in a separated cluster. Therefore, we look for the minimum $r$ value for which the extracted minima are i.i.d., and the estimated Pareto parameters associated to the higher $r$ values do not differ significantly. 

\subsubsection{ARIMA-GARCH model}
\label{sec:arima-garch}
The hybrid ARIMA-GARCH model is used to predict the wireless channel by modeling the mean and variance dependency among the samples. First, ARIMA model is used to model the dependency of the variation in the conditional mean of the time series data, assuming that the variance is constant. Then GARCH model captures the variation in the conditional variance of the data sequence over time. 
The residuals of ARIMA-GARCH filtering then form an i.i.d. sequence, which represents the remaining randomness from the channel prediction, and is used in the analysis of the channel tail statistics by using EVT.

ARIMA$(p,d,q)$ is commonly used to model the conditional mean of a sequence, where $p$ and $q$ stand for the number of autoregressive (AR) and moving average (MA) terms, respectively, and $d$ is the degree of differencing \cite{garch_04}. If the time series data is not stationary, ARIMA parameters estimation starts by transforming the non-stationary process to a stationary one. 
If the non-stationarity is caused by the non-constant variance, power transformation is applied by taking the square root, cube root, or logarithm of the time series. 
%This method is simple but often effective to stabilize the variance across time. 
If non-stationarity is due to the non-constant mean, differencing is applied by creating a new series whose value at time $t$ is the difference between the samples at time $t$ and time $t+d$, i.e., $x(t)-x(t + d)$ \cite{arima_03}. 
%Although there is no limitation on the number of times we apply differencing, it is not recommended to apply it more than twice \cite{arima_01}.
%If the non-stationarity is caused by both non-constant mean and variance then, the power transformation should be applied prior to differencing.
The output of power transformation/differencing is a stationary process denoted by ARIMA($p,0,q$) and given by
\begin{equation}
    r_t = c +  \sum_{i=1}^{p} \theta_{i} r_{t-i} + \sum_{j=1}^{q} \beta_{j} \epsilon_{t-j}
    + \epsilon_{t},
\end{equation}
where $r_{t}$ is the process mean at time $t$ conditioned on the past realizations, $c$ is a constant, $\theta_{i}$ is the $i^{th}$ AR coefficient for $i \in \{1,..., p \}$, $\beta_{j}$ is the $j^{th}$ MA coefficient for $j \in \{1,..., q \}$, and  $\epsilon_{t}$ is the residual at time $t$.
MA and AR coefficients of ARIMA($p,0,q$) can be determined by analyzing the estimated autocorrelation function (ACF) and partial ACF (PACF), respectively \cite{arima_01}-\nocite{arima_02}\cite{garch_06}.

GARCH model characterizes the conditional variance of the predicted ARIMA($p,0,q$) residuals, $\epsilon_{t}$ \cite{garch_01}-\cite{garch_02}, \cite{garch_08}.
%In GARCH, the filtered residuals of ARIMA are decomposed as $\epsilon_{t}=z_{t}\sqrt{\sigma_{t}^{2}}$, where $\{z_{t}\}$ are i.i.d. random variables with zero mean and constant variance \cite{garch_01}-\cite{garch_02}, \cite{garch_08}, and $\sigma_{t}^{2}$ is the variance of the residual at time $t$ conditioned on the previous observations. 
For most practical purposes, the simplest form of the GARCH model, GARCH$(1,1)$, is used, which is given by
\begin{equation}
    \sigma_{t}^{2} = k + \gamma \sigma_{t-1}^{2} + \phi \epsilon_{t-1}^{2} + \psi sgn( -\epsilon_{t-1} ) \epsilon_{t-1}^{2},
\end{equation}
where $k$ is a constant, $\gamma$ and $\phi$ are the coefficients of conditional variance and squared residual at time $t-1$, respectively, and $\psi$ is the coefficient of squared residual at time $t-1$ incorporating the impact of the sign of the residual at time $t-1$. Note that $sgn(-x) = 1$ if $x<0$; $0$ if $x=0$; and $-1$ if $x>0$.
The output of ARIMA-GARCH filtering is $z_{t}$, which is a sequence of i.i.d. residuals with zero mean and the normalized variance one \cite{garch_01}-\nocite{garch_02}\nocite{garch_03}\cite{garch_04}.

The ACF of the residuals, as well as the ACF of the squared residuals, determine the suitability of the fitted ARIMA and GARCH models, respectively. 
%The ACF of the samples measures the correlation among the observations as a function of the time lag to diagnose the validity of ARIMA-GARCH model in removing the linear dependency of the observations by predicting the mean dependency. On the other hand, the ACF of the squared samples quantifies the degree of the autocorrelation between the squared value of the observations to check the validity of the model in detecting the dependency of the samples in the conditional variance \cite{evt_01}, \cite{garch_04}.  
If in both aforementioned ACFs, the significant correlations at lag $0$ are followed by the correlations approximately equal to zero, one may conclude that the parameters of the hybrid ARIMA-GARCH model are determined properly, and the extracted residuals are independent and identically distributed. Otherwise, the estimated ARIMA-GARCH parameters should be revised to remove the additional correlation in the ACFs of the residuals and squared residuals \cite{evt_01}, \cite{garch_04}.

\section{Proposed Channel Modeling Framework}
\label{sec:channel}
%\vspace{0.2cm}
We propose a novel EVT-based channel modeling methodology with the goal of estimating the lower tail statistics of the communication channel in the ultra-reliable regime with very high accuracy and minimum amount of data. 
The methodology consists of the following steps: The sequence of measured received power samples is converted into a sequence of i.i.d samples by removing their dependency via either declustering or ARIMA-GARCH filtering. Upon applying EVT to the resulting sequence of i.i.d. samples, the parameters of the Pareto distribution associated with different thresholds are estimated by using the MLE.
\begin{figure}[ht]
    \centering
    \includegraphics[scale=0.7]{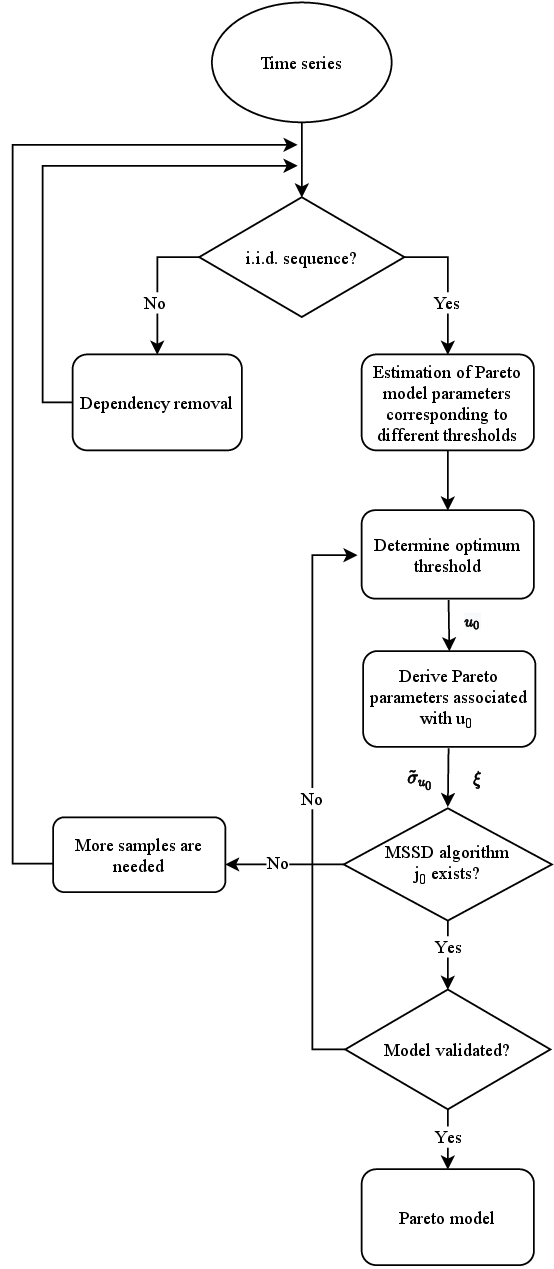}
    \caption{Flowchart of the proposed channel modeling framework.}
    \label{fig:maindiagram}
\end{figure}
%\parbox[t]{\dimexpr\textwidth-\leftmargin}{
%      \vspace{-0.1mm}
%      \begin{wrapfigure}[21]{r}{0.4\textwidth}
%       \centering
%        \vspace{-\baselineskip}
%        \includegraphics[width=\linewidth]{Pareto.pdf}
%        \caption{Flowchart of the proposed channel modeling framework.}
%        \label{fig:maindiagram}
%\end{wrapfigure}
The optimum threshold is then determined by applying mean residual life and parameter stability methods. Next, the stopping conditions are specified to determine the minimum number of samples required for the tail estimation by using GPD. If the number of the collected samples is below this minimum number, the algorithm continues by collecting more data and restarting the process. Otherwise, the validity of the model corresponding to the optimum threshold is assessed by using the probability plots. The proposed algorithm is depicted in Fig.~\ref{fig:maindiagram} and explained in detail next.

\subsection{Dependency removal}
The EVT input needs to be a sequence of i.i.d. random variables, thus, at the first step, the dependency behavior of the input process should be investigated. 
The proposed algorithm starts by calculating the autocorrelation of the input time series sequence. If the autocorrelation at all lags except lag $0$ is effectively $0$, i.e., below a predetermined small threshold value, we skip this step. Otherwise, the dependency removal methods given in Section~\ref{sec:dependencyremovalmethods} are employed: Declustering method is used in estimating the channel tail distribution of the original observed data, while ARIMA-GARCH filtering is used when the channel tail estimation of the residuals extracted from the channel prediction is of interest. 

\subsection{Estimation of Pareto model parameters}
The parameters of the Pareto distribution, $\xi$ and $\tilde{\sigma}_u$, are calculated by using MLE for different threshold values $u$. The threshold value inherently determines the percentage of the data considered as extreme events, and thus, included in the tail statistics. The lower and upper boundaries of the threshold values are chosen as the minimum and average of the i.i.d. input sequence, respectively. 
%\vspace{-0.2cm}
\subsection{Determination of optimum threshold}
\label{sec:threshold}
The suitability of the Pareto model fitted to the tail distribution is significantly affected by the threshold value: Too high a threshold leads to bias due to the immense portion of the data in the tail, while too low a threshold leads to high variance due to a few excesses with which the model is estimated \cite{evt_07}. To determine the optimum threshold, mean residual life and parameter stability methods are employed prior to and after the estimation of the Pareto model parameters, respectively. These two methods are used to validate and complement each other for consistency and robustness in case one of them is not able to determine the optimum threshold.

\subsubsection{Mean residual life method}
\label{sec:meanresplot}
The mean residual life plot consists of the pairs of thresholds and the expected value of the samples exceeding the corresponding threshold, as given by
\begin{equation}
\label{eqn:meanresplot}
    \big\{ \big( u, E(u -X |X < u) \big) : u > X_{min} \big\},
\end{equation} 
where $X = [ X_1, X_2, ..., X_n ]$ is the i.i.d. input sequence and $X_{min}$ is the smallest value in the sequence of $X$ \cite{evt_02}-\cite{evt_04}, \cite{evt_08}. Referring to Corollary \ref{cor:meanexcces}, the best threshold, $u_{0}$, is the highest threshold below which, the mean excess function, i.e., $E(u -X |X < u)$, is a linear function of $u$, based on the $R^2$ measure: Larger value of $R^2$ indicates a better linear fit to the empirical data. Therefore, the mean residual life plot should be a straight line for thresholds $u<u_{0}$, with slope and intercept equal to $\frac{-}\xi{1-\xi}$ and $\frac{\tilde\sigma_{u_{0}} + \xi u_{0}}{1-\xi}$, respectively.
However, if the mean residual life plot behaves almost linearly for all possible thresholds, the decision of optimum threshold is determined by using a parallel approach entitled parameter stability method.

\subsubsection{Parameter stability method}
Parameter stability is a complementary method to the mean residual life plot in the determination of the optimum threshold. 
In this method, first, the scale and shape parameters of the GPD fitted to the tail distributions corresponding to different thresholds are extracted. Then, based on Corollary \ref{cor:stability}, the optimum threshold, $u_{0}$, is the highest threshold below which the estimated scale and shape parameters are linear function of $u$, according to the $R^2$ measure.
%\vspace{-0.3cm}
\subsection{Minimum Sample Size Determination (MSSD) Algorithm}
\label{sec:sample size}
MSSD is developed with the goal of determining the stopping conditions for the collection of enough samples used in modeling the tail statistics.
Although the increase in sample size provides a more accurate estimation of the tail statistics, collecting a large number of samples might be expensive, time-consuming, and even impossible in many experiments \cite{evt_09}.
The MSSD algorithm is a smart enumeration based algorithm, mainly inspired by the methodology given in \cite{evt_09}, which has been proposed to determine the minimum sample size required to model the distribution of maxima/minima in a time sequence by using GEV distribution. We adapt this method for the determination of the minimum number of samples required to estimate the distribution of the values exceeding a given threshold in a time series by using GPD.

MSSD Algorithm is based on the normality of the return levels given in Corollary~\ref{cor:returnlevel}: If the sample size $n$ is sufficiently large to estimate the tail distribution of values exceeding threshold $u$ by using GPD($\xi$, $\tilde \sigma_{u}$), then, the return level, $r_{m}$, is normally distributed. However, in most of the experiments, it is not practical to check the normality assumption of the return levels as there exists only one or a few time series. To tackle this problem, standard bootstrap method with replacement is applied to the observations with the goal of generating more sets for assessing the distribution of the return levels \cite{evt_09}. MSSD Algorithm consists of two main parts:
\begin{enumerate}
    \item \textit{Two-step bootstrapping process:} Bootstrapping is carried out in two steps to achieve a better bootstrapping performance in terms of the estimation accuracy, as stated in \cite{evt_09}, \cite{samplesize_01}. In the first step, the bootstrap is performed on the original observations to generate $M-1$ data sets, each with sample size $n$, same as the size of the original data set. In the second step, sample sets of different sizes are extracted from each of the $M$ data sets in order to obtain the return levels associated with different sample sizes, performing $K-1$ bootstrapping on these data sets. Therefore, at the end of the two-step bootstrapping, $M \times K$ data sets with different lengths are generated.
    \item \textit{Normality assessment of the return levels:} Upon extracting the shape and scale parameters of the GPD fitted to the tail distribution of each sample set, we calculate the corresponding return levels for assessing their normality by using Anderson-Darling (AD) normality test. AD test is used instead of Shapiro-Wilk test in \cite{evt_09} since AD test is the most powerful empirical distribution function (EDF) test with higher focus on the tail of the distribution, compared to the other existing normality tests, and more appropriate for assessing the normality of the data with sample sizes greater than $50$ \cite{normaltest_01}. After applying AD test, we extract the $p$-values, where $p$-value is defined as the index measuring the strength of the evidence against the null hypothesis $H_0$ that the data follows the normal distribution. Then, we compare these $p$-values with the critical value $\alpha$, which is defined as the probability of rejecting the null hypothesis when the null hypothesis is true, i.e., probability of making a wrong decision, to (i) check whether there are enough samples for modeling the tail distribution, (ii) determine the minimum number of samples required to estimate the tail distribution, and (iii) approximate the gain attained by increasing the sample size more than its minimum required value. 
\end{enumerate}

\begin{algorithm}[b!]
\caption{\textbf{Minimum Sample Size Determination (MSSD) Algorithm}}
\label{alg:returnlevel}
\begin{algorithmic}
\STATE \textbf{Input}: {\boldm $X$} $= [X_{1}, X_{2},...,X_{n}]$, $\alpha$, $M$, $K$, and $n_{0}$;
\STATE \textbf{Output}: $j_{0}$, $G_{j}$, $j \in \{j_{0},...,n\}$;
\end{algorithmic}
\begin{algorithmic}[1]
\STATE {\boldm $x$}$_1$ $=(x_{11},...,x_{1n})$ $=$ {\boldm $X$};
\FOR{m=1:M}
\IF{$m \neq 1$}
 \STATE obtain bootstrap samples of {\boldm $x$}$_1$, as
 {\boldm $x$}$_m$ $= (x_{m1}, ..., x_{mn})$;
\ENDIF
\FOR{$j=n_{0}:n$}
\STATE {\boldm $y$}$_{mj}^{1}$ $=(x_{m1}, ..., x_{mj})$; 
\FOR{$k=1:K$}
\IF{$k \neq 1$}
\STATE obtain bootstrap samples of {\boldm $y$}$_{mj}^{1}$, as {\boldm $y$}$_{mj}^{k}$;
\ENDIF
\STATE \text{calculate $r_{mj}^{k}$ using parameters of GPD fit for {\boldm $y$}$_{mj}^{k}$};
\ENDFOR
\STATE {\boldm $d$}$_{mj}$ $= (r_{mj}^{1},...,r_{mj}^{K})$;
\STATE store $p$-value of AD normality test on {\boldm $d$}$_{mj}$ in $p_{mj}$;
\ENDFOR
\ENDFOR
\FOR{$j=n_{0}:n$}
\STATE calculate the mean of $p$-value, $\bar{p}_{j}$;
\STATE calculate the standard deviation of $p$-value, $s_{j}$;
\ENDFOR
\IF{$\exists j_{0} \in [n_{0},n]$  \textbf{such that} $\bar{p}_{j} - t^{*} s_{j}>\alpha$ for all $j \in [j_{0},n]$, \textbf{and} $\nexists i <j_{0}$  \textbf{such that} $\bar{p}_{j} - t^{*} s_{j}>\alpha$ for all $j \in [i,n]$}
\FOR{$j=(j_{0}+1):n$}
\STATE $G_{j} = s_{j}-s_{j_{0}}$;
\ENDFOR
\RETURN $j_{0}$, $\{G_{j_{0}+1},...,G_{n}\}$;
\ELSE
\RETURN no feasible channel tail estimation exists;
\ENDIF
\end{algorithmic}
\end{algorithm}
MSSD Algorithm, given in Algorithm \ref{alg:returnlevel}, is described in detail as follows. The inputs of the algorithm are the observation sample set {\boldm $X$}; the critical value $\alpha$, which is used as a threshold for assessing the $p$-values of the AD normality test; the number of new generated data sets in the first and second steps of the bootstrapping process denoted by $M$ and $K$, respectively; and the minimum sample size $n_{0}$ for which the estimated shape parameter $\xi$ of the GPD fitted to the values exceeding the optimum threshold $u_{0}$ is greater than $-0.5$, based on Corollary~\ref{cor:returnlevel}. The algorithm starts by storing the observation samples in a new vector {\boldm $x$}$_1$ for more convenient notation (Line~$1$). Then, the first step of bootstrapping with replacement is performed on the original data set to generate $M-1$ sample sets, each of size $n$, denoted by {\boldm $x$}$_m$, $m\in\{2,...,M\}$ (Lines~$2-4$). Next, to perform the second step of bootstrapping, upon extracting the first $j$ samples from each data set {\boldm $x$}$_m$, and assigning them to a new vector denoted by {\boldm $y$}$_{mj}^{1}$, for $j\in\{n_{0},...,n\}$ and $m\in\{1,...,M\}$ (Lines~$6-7$), $K-1$ bootstrapping with replacement is performed on {\boldm $y$}$_{mj}^{1}$, storing each newly generated data set in the corresponding {\boldm $y$}$_{mj}^{k}$, where $k\in\{2,...,K\}$ (Line~$10$). Then, the GPD fit is applied on the values below the optimum predefined threshold of the data set {\boldm $y$}$_{mj}^{k}$, and the corresponding scale and shape parameters are used in the calculation of the corresponding return levels $r_{mj}^{k}$ from Eqn.~(\ref{eqn:returnlevel}) (Line~$12$). The return levels associated to the same sample size are placed into the vector {\boldm $d$}$_{mj}$, for $j\in\{n_{0},...,n\}$ and $m\in\{1,...,M\}$ (Line~$14$). Next, the normality assumption of the return levels is checked by applying the AD normality test on {\boldm $d$}$_{mj}$, storing the $p$-values of the test in $p_{mj}$ (Line~$15$). To enhance the performance of the bootstrapping process and decrease the test error in the estimation of the $p$-value, the average of $M$ $p$-values corresponding to the same sample size $j$, denoted by $\bar{p}_{j}$, is calculated by using $ \bar{p}_{j} = ({1}/{M}) \sum_{m=1}^{M}p_{mj}$, for $j\in\{n_{0},...,n\}$ (Line~$19$). However, the expected $p$-value is not exactly $\bar{p}_{j}$, and typically takes a value within the range of $[\bar{p}_{j} - t^{*} s_{j},\bar{p}_{j} + t^{*} s_{j}]$, where $t^{*}$ is the critical value obtained from a $t$-distribution with $M-1$ degrees of freedom, and $s_{j}$ is the standard deviation of the $p$-value corresponding to the same sample size $j$, calculated by $s_{j} =\sqrt{({1}/{M}) \sum_{m=1}^{M}(p_{mj}-\bar{p}_{j})^{2}}$ (Line~$20$). 
Based on Corollary~\ref{cor:returnlevel}, if there exists a sample size $j_{0}$ such that the return level is normally distributed for all sample sizes greater than $j_{0}$, i.e., $p$-value, or more explicitly its lower bound denoted by $\bar{p}_{j} - t^{*} s_{j}$, is greater than $\alpha$ for $j>j_{0}$, and there exists no sample size $i$ less than $j_{0}$, such that $\bar{p}_{j} - t^{*} s_{j} >\alpha$ for $j>i$, then, $j_{0}$ is the minimum plausible sample size with which we can estimate the tail distribution by using GPD (Line~$22$). Assuming a small quantity, such as $0.01$ or $0.05$, for the critical value $\alpha$, we are $100(1-\alpha)\%$ confident that $j_{0}$ is the minimum number of samples required to obtain a properly normally distributed return level, and therefore, estimate the tail distribution by using GPD. Then, the algorithm determines the gain attained by increasing the sample size beyond the sufficient number $j_{0}$ (Lines~$23-24$). 
The gain is defined as the difference between the standard deviations $s_j$ and $s_{j_0}$ corresponding to the sample sizes $j$ and $j_0$, respectively, where $j\in\{j_{0},...,n \}$. In other words, the gain determines how much we can expand the range of values for the expected $p$-value, i.e., $[\bar{p}_{j} - t^{*} s_{j},\bar{p}_{j} + t^{*} s_{j}]$, while still keeping the $p$-value above the critical value $\alpha$, i.e., the distribution of the return levels normal. The algorithm terminates by returning $j_0$, if it exists, and the obtained gains for the sample sizes greater than $j_{0}$ (Line~$26$). Otherwise, the algorithm terminates stating that no feasible channel tail estimation exists since the existing samples are not sufficient to characterize the tail statistic (Lines~$27-28$). Then, the sample size needs to increase by at least $0.1 \times n_{0}$ samples. The complexity of the MSSD algorithm is $O(nMK)$, since it bootstraps the sequence with sample size $n$ for $M$ iterations with complexity $O(nM)$, and then, by extracting a portion of samples from the newly generated sequences, MSSD algorithm bootstraps them for $K$ iterations. 
%\vspace{-0.3cm}
\subsection{Model Validation}
\label{sec:modelchecking}
The model validity is assessed by using the probability plots in extreme value analysis. The probability plots are graphical techniques used to investigate whether the output of the proposed model fits well to the actual values. Two kinds of probability plots are used for validation: Probability/Probability (PP) plot and Quantile/Quantile (QQ) plot. 

In the PP plot, the CDF of the received power is plotted versus the modeled CDF by GPD. \textbf{PP plot} includes the pairs 
%\vspace{-0.5cm}
\begin{align}
\label{eqn:probilityplot}
\begin{split}
   \{ ({i}/{k+1}), H_{u}(y_{i}); i = 1,...,k \},\\
   \hfill \hspace{0.3cm}
   H_{u}(y_{i}) = 1 - \big( 1+\frac{{\xi}y_{i}}{\tilde{\sigma}_{u}} \big) ^{-1/{\xi}},
\end{split}
\end{align}
where $y_{i}$ is the difference between the threshold $u$ and $X_{i}$ value exceeding the threshold such that $y_{1} \leq ... \leq y_{k}$; $k$ is the total number of excess values; $H_{u}$ is the estimated Pareto model fitted to the CDF of the threshold excesses for threshold $u$ with the associated shape and scale parameters $\xi$ and $\tilde{\sigma}_u$, respectively \cite{evt_04}. 
For any one of the $y_{i}$, exactly $i$ out of the $k$ observations have a value less than or equal to $y_{i}$, so the empirical probability of an observation
being less than or equal to $y_{i}$ is ${i}/{k+1}$. Note that $1$ in the denominator is a slight adjustment that is usually made to avoid reaching the CDF exactly equal to $1$ \cite{evt_04}.

In the QQ plot, empirical and modeled quantiles are plotted against each other. \textbf{QQ plot} consists of the pairs
\begin{align}
    \label{eqn:quantileplot}
    \begin{split}
        \{X_{i},({H}_{u}^{-1}(i/(k+1)), i=1,...,k \},\\
        \hfill \hspace{0.2cm}
        {H}_{u}^{-1}(z_{i}) = u + \frac{\tilde{\sigma}_{u}}{\xi}\big[1 - z_{i}^{-{\xi}} \big],
    \end{split}
\end{align}
where $u$ denotes the threshold; $\xi$ and $\tilde{\sigma}_u$ are the associated shape and scale parameters of the GPD fitted to the values below threshold $u$, respectively; $z_{i}$ is the probability whose associated quantile is of interest; $X_{i}$ is the $i^{th}$ input sample exceeding the threshold such that $X_{1}\leq...\leq X_{k}$; and $k$ is the number of values exceeding $u$. For any one of the $X_{i}$, exactly $i$ out of the $k$ observations have a value less than or equal to $X_{i}$. 

If the generalized Pareto distribution appropriately estimates the extreme values exceeding a threshold $u$, then, nearly all data points in both probability plots lie on the unit diagonal, i.e., the $45^\circ$ line \cite{evt_01}, \cite{evt_04}.

\section{Performance Evaluation}
\label{sec:performance evaluation}
The goal of this section is to evaluate the performance of the proposed channel modeling algorithm in determining the optimum threshold and optimum stopping conditions on the number of samples required to estimate the channel tail, and compare its performance with the traditional extrapolation-based methods in the estimation of the channel tail statistics. The traditional extrapolation-based approach first estimates the distribution of the existing channel data for the reliability order of $10^{-3}-10^{0}$ PER \cite{vehicular_01}-\cite{vehicular_02}, and then, extrapolates the fitted distribution toward the ultra reliable region of $10^{-9}-10^{-6}$ PER \cite{urllc_02}.

We have collected channel measurement data within the engine compartment of Fiat Linea under various engine and driving scenarios at $60$ GHz by using a Vector Network Analyzer (VNA) (R$\And$S$\textsuperscript{\textregistered}$ ZVA$67$). The transmitter and receiver are attached to the VNA ports through the R$\And$S$\textsuperscript{\textregistered}$ ZV-Z$196$ port cables with $610$ mm length and maximum $4.8$ dB transmission loss. Both transmitter and receiver antennas are horn antennas operating between $50$-$75$ GHz with a nominal $24$ dBi gain and $12^\circ$ vertical beamwidth. The antennas are connected to the waveguide operating at the frequency span of $50$-$65$ GHz, with insertion loss $0.5$ dB and impedance $50$ $\Omega$. 
The locations of the transmitter and receiver antennas are selected out of the possible locations for the wireless sensors located within the engine compartment, namely locations $5$ and $13$ in \cite[Fig.~$1$]{vehicular_02}, and shown in Fig.~\ref{fig:fiatlinea}.
\begin{figure}[ht]
    \centering
    \includegraphics[width=.7\linewidth]{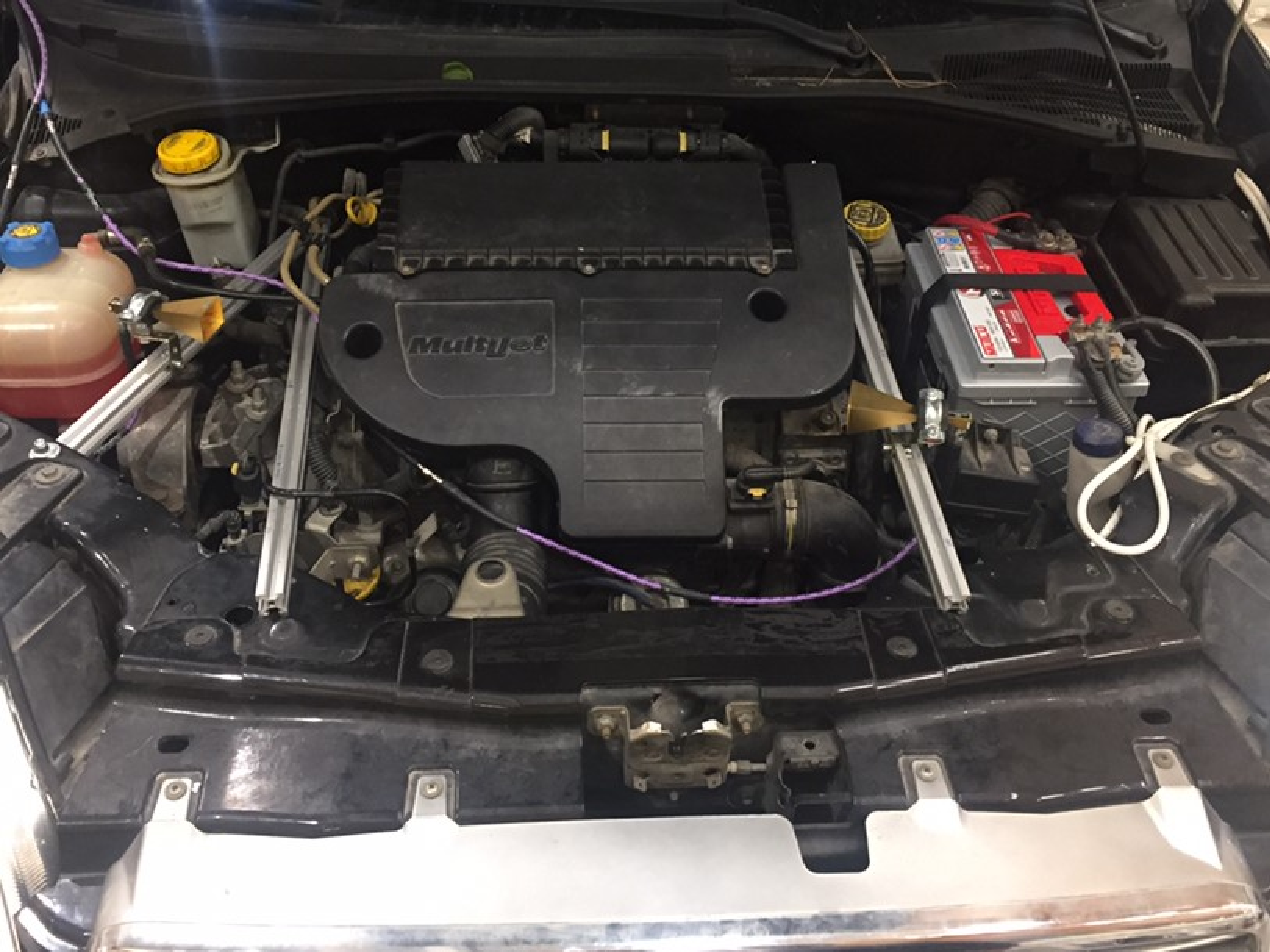}
    \caption{Transmitter and receiver antennas in the engine compartment.}
    \label{fig:fiatlinea}
\end{figure}
%\parbox[t]{\dimexpr\textwidth-\leftmargin}{%
%      \vspace{-0.5mm}
%      \begin{wrapfigure}[9]{r}{0.4\textwidth}
%        \centering
%        \vspace{-\baselineskip}
%        \includegraphics[width=\linewidth]{fiatlinea.jpg}
%        \caption{Transmitter and receiver antennas in the engine compartment.}
%        \label{fig:fiatlinea}
%\end{wrapfigure}

The data were collected while the car is moving on the asphalt and stone roads at Koc University campus. The driving scenarios for the car include pushing the gas paddle while the car is static, moving up a ramp, and driving on the flat road. About $10^{6}$ successive samples are captured for $30$ minutes with time resolution of $2$ ms. We use \textsc{MATLAB} for the implementation of the proposed algorithm as well as the traditional extrapolation-based approach \cite{urllc_02}.

In the following, first, we provide the numerical results in the determination of the optimum threshold over which the tail statistics are derived based on two approaches, ARIMA-GARCH and declustering, and then validate the tail model corresponding to the optimum threshold using probability plots in Section~\ref{sec:PerformanceOptimumThreshold}. Next, in Section~\ref{sec:PerformanceSampleNumber}, we present the performance evaluation of the proposed MSSD algorithm in determining the minimum number of samples required to estimate the channel tail statistics. Finally, we compare the performance of the proposed methodology to that of the traditional extrapolation-based technique in the estimation of the channel tail statistics in Section~\ref{sec:PerformanceComparison}.
%\vspace{-0.3cm}
\subsection{Optimum Threshold Determination}
\label{sec:PerformanceOptimumThreshold}
\subsubsection{ARIMA-GARCH approach}
ARIMA-GARCH filtering is used to predict the channel by its mean and variance conditioned on the past observations, and remove the dependency among the samples. Upon applying EVT on the i.i.d. residuals of ARIMA-GARCH, the best threshold is determined by using the mean residual life plot and the parameters stability methods. The probability plots are then shown to validate the GPD model fitted to the channel tail distribution. Please note that due to the stationarity of the process, the $d$ parameter of ARIMA-GARCH is $0$, and the best performance of ARIMA-GARCH at which the correlations of the residuals are almost $0$ for any time lag greater than $0$ is achieved when $p$ and $q$ are $2$ \cite{arima_01}-\cite{arima_02}.

Fig.~\ref{fig:meanresplot} shows the mean excess of the filtered residuals at different thresholds for ARIMA-GARCH filtering based on the mean residual life method. Mean excesses quantify the expected values of the residuals exceeding a given threshold, where the threshold varies from $-3.85$ dBm to $-0.47$ dBm.
The lower and upper boundaries of the threshold are chosen such that, $0.3\%$ and $30\%$ of the extremely low-value residuals are in the tail, respectively.
Since the mean excess increases linearly as threshold increases for all threshold values, i.e., $R^2>0.95$ for all thresholds, it is not possible to accurately distinguish the optimum threshold value based on Corollary~\ref{cor:meanexcces}, requiring the use of the parameter stability method. 

\begin{figure}[!htb]
   \begin{minipage}{0.48\textwidth}
     \centering
     \includegraphics[width=.7\linewidth]{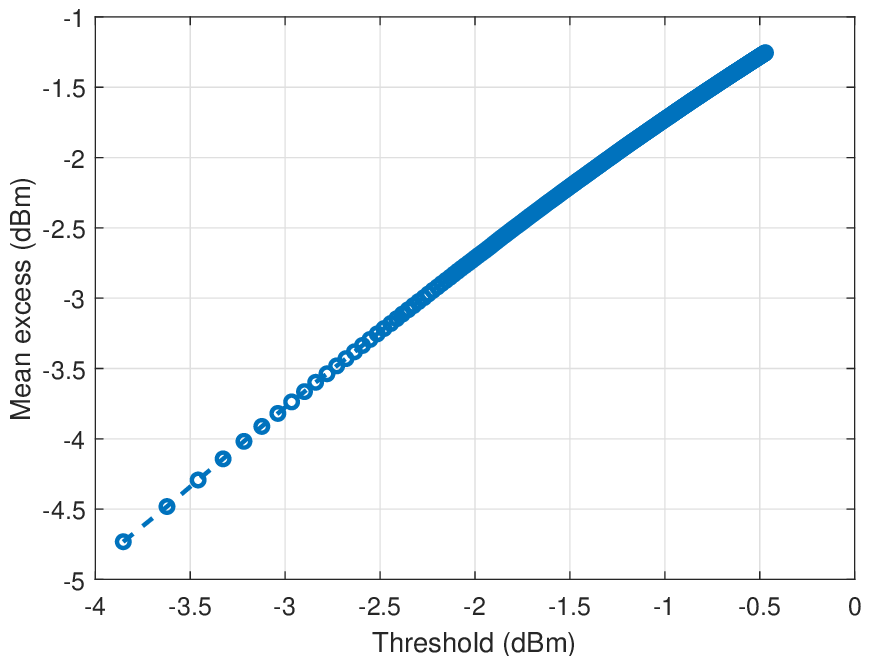}
     %\vspace{-0.2cm}
     \caption{Mean excesses at different thresholds for ARIMA-GARCH filtering.}\label{fig:meanresplot}
   \end{minipage}\hfill
   \vspace{0.3cm}
   \begin{minipage}{0.48\textwidth}
     \centering
     \includegraphics[width=.7\linewidth]{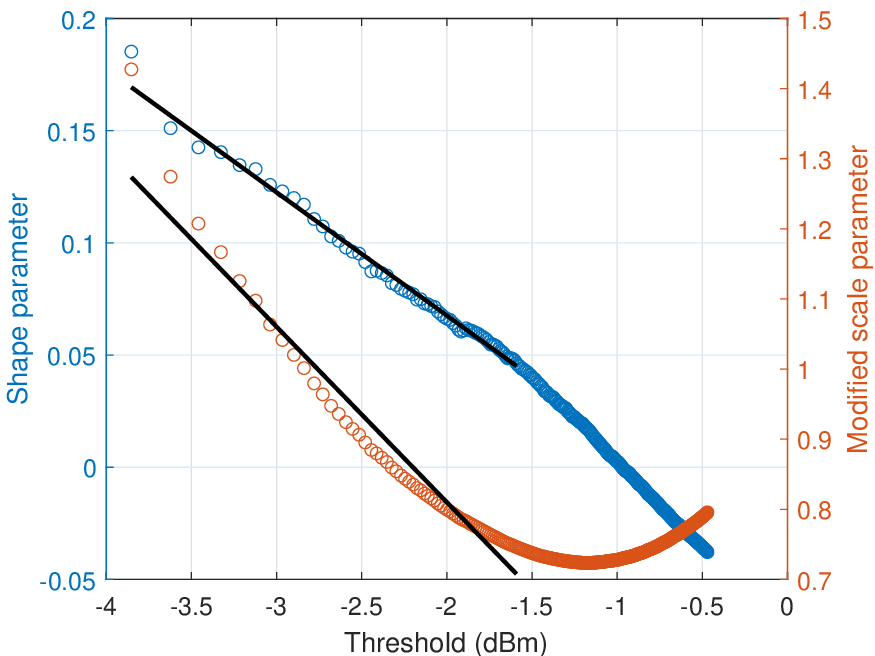}
     %\vspace{-0.2cm}
     \caption{Shape and modified scale parameters of GPD at different thresholds for ARIMA-GARCH filtering. The black lines are the fitted linear regression models.}\label{fig:stabilityplot}
   \end{minipage}
\end{figure}
Fig.~\ref{fig:stabilityplot} shows the shape and modified scale parameters of the fitted GPD model at different thresholds on the i.i.d. input sequence obtained by ARIMA-GARCH filtering for the parameter stability method.
As opposed to Fig.~\ref{fig:meanresplot}, Fig.~\ref{fig:stabilityplot} can be used to determine the optimum value below which both shape and modified scale parameters change linearly with respect to the threshold. The optimum threshold, $u_{0}=-1.6$ dBm, is in fact, the maximum threshold value for which the corresponding regression models fitted to the estimated parameters at $u<u_{0}$ results in $R^2$ value greater than $0.95$. The black lines illustrate the linear regressions fitted to the parameters at the optimum threshold and the thresholds below that. $\sigma^{*}$ and $\xi$ corresponding to the optimum threshold $-1.6$ dBm are $0.746$ and $0.046$, respectively.

Fig.~\ref{fig:probplotsArima} shows the probability plots, including PP plot and QQ plot, for ARIMA-GARCH approach. This figure is used to validate the performance of the fitted Pareto model associated to the optimum threshold derived from Fig.~\ref{fig:stabilityplot}. The black line is the diagonal line for diagnosing whether the values obtained by the Pareto model are in a good agreement with those of the empirical model. 
The black line has slope $1$ and intersections $0$ and $-16$ in the PP plot and QQ plot, respectively. The linearity of the plots is reasonable, but not perfect, especially at the left side of the QQ plot. Fig.~\ref{fig:ppplot} shows that the CDF values obtained by the Pareto model are identical to the corresponding empirical values. However, Fig.~\ref{fig:qqplot} shows that the Pareto model provides better performance at high quantile values than the quantile values lower than $-7$ dBm, mainly due to the deviation of the i.i.d. residual distribution from the normal distribution. In spite of the divergence of the modeled from the empirical in Fig.~\ref{fig:qqplot}, since Fig.~\ref{fig:ppplot} illustrates a perfect agreement between the empirical and modeled values, one can still conclude that the fitted Pareto distribution models the empirical i.i.d. residuals obtained by ARIMA-GARCH filtering accurately \cite{evt_01}.
%\vspace{-0.5cm}
\begin{figure}[t]
\centering
\captionsetup[subfigure]{labelformat=empty}
     \begin{center}
        \subfloat[(a)]{%
            \label{fig:ppplot}
            \includegraphics[width=.7\linewidth]{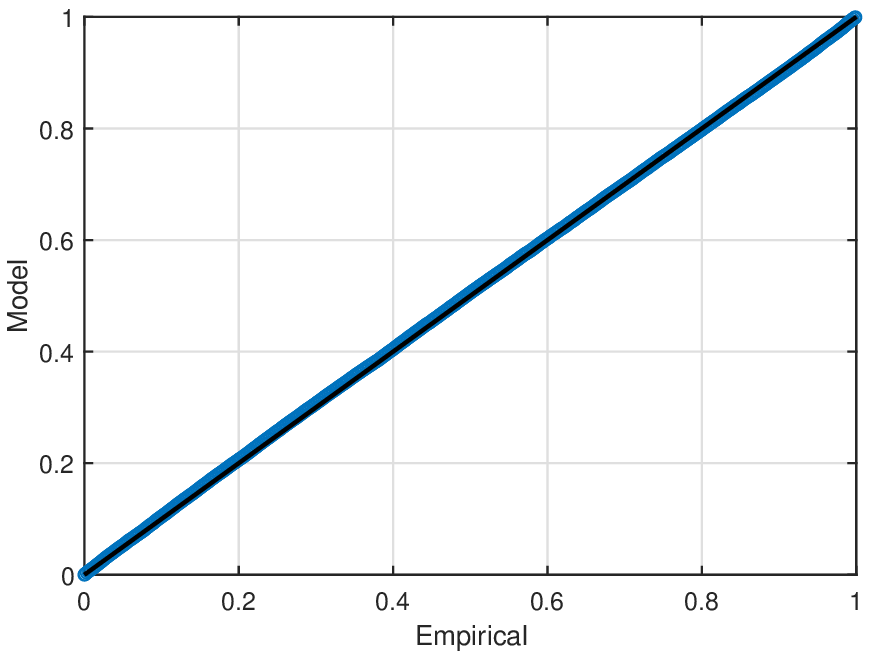}
        }\\%
        \subfloat[(b)]{%
            \label{fig:qqplot}
            \includegraphics[width=.7\linewidth]{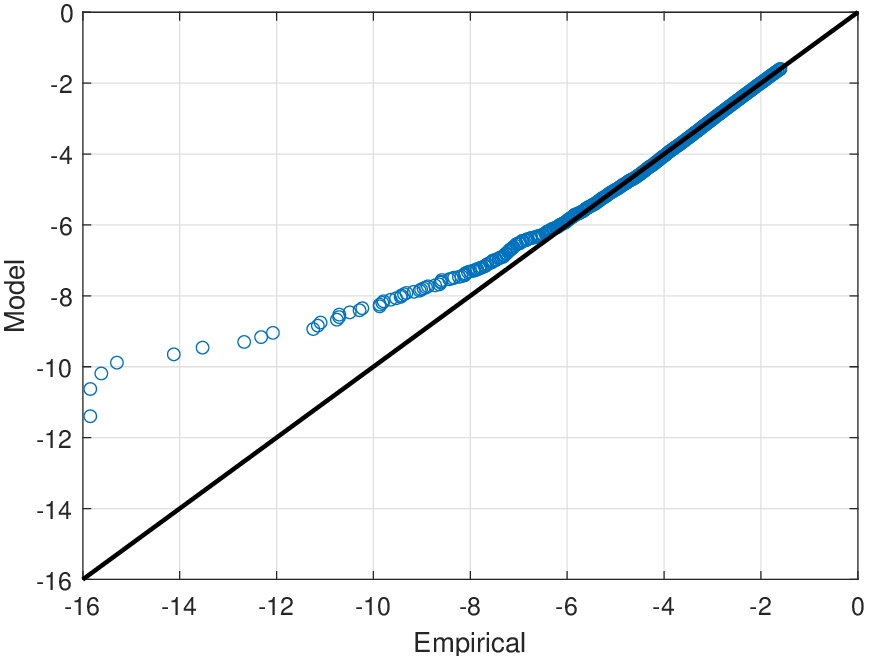}
        }\\ 
    \end{center}
    %\vspace{-5mm}
    \caption{Probability plots for ARIMA-GARCH filtering: (a) PP plot, and (b) QQ plot. The black lines are diagonal lines assessing the goodness of fit.}
   \label{fig:probplotsArima}
\end{figure}
%\vspace{-0.2cm}
\subsubsection{Declustering approach}\label{sec:performance-declustering}
Declustering is used to remove the dependency among the samples and obtain i.i.d. observation for the EVT input. First, the optimum values for the threshold $u$ and the minimum gap $r$ between the samples are determined by using mean residual life plot and parameter stability method. Then, the probability plots are used to validate the accuracy of the GPD model fitted to the channel tail distribution.

Fig.~\ref{fig:meanresplotdeclustering} shows the mean excess of the observations at different thresholds and minimum gaps between the clusters for the declustering approach based on the mean residual life method. The expected values of the samples exceeding a given threshold are plotted against the threshold, where the threshold varies from $-25$ dBm to $-10$ dBm. When there is no minimum gap restriction between the clusters, i.e., $r=0$, the mean excess increases linearly as
threshold increases for all threshold values, similar to Fig.~\ref{fig:meanresplot}. However, by choosing $r>0$, the mean residual life plot is linear in threshold $u$, for $u<-19$ dBm, where $-19$ dBm is the largest threshold value for which $R^{2}>0.95$ for all $r>0$. Therefore, according to the mean residual life plot for the declustering method, the optimum threshold and the minimum $r$ values are $-19$ dBm and $1$, respectively.    

\begin{figure}[ht]
\centering
     \begin{center}
            \includegraphics[width=.7\linewidth]{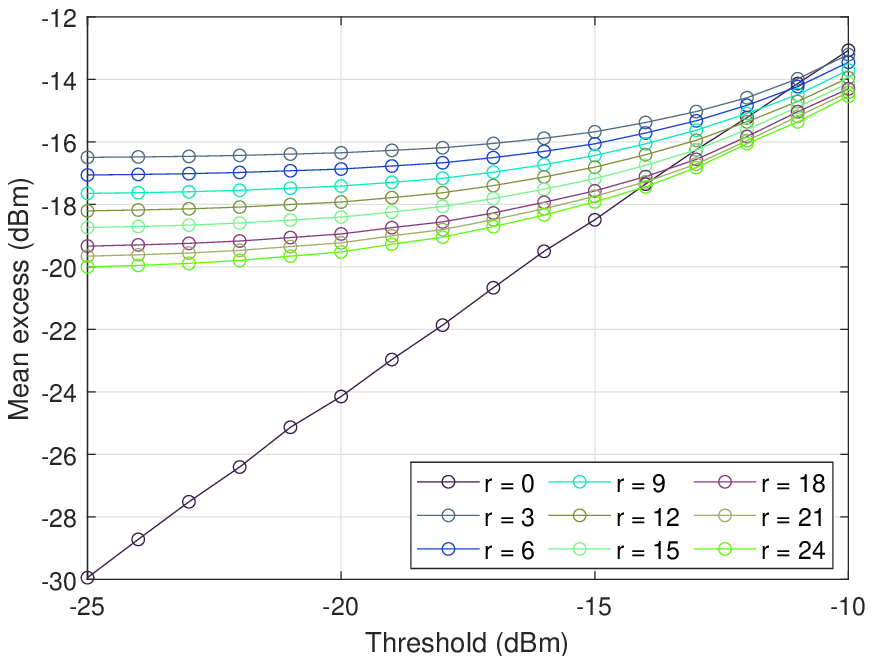}
    \end{center}
    %\vspace{-0.5cm}
    \caption{Mean excesses at different thresholds and minimum gaps between the clusters for declustering approach.}%
   \label{fig:meanresplotdeclustering}
\end{figure}
Fig.~\ref{fig:paramsvsth} shows the shape and modified scale parameters of the fitted GPD model on the i.i.d. sequence obtained by the declustering approach at different $u$ and $r$ values for the parameter stability method.
By choosing $r$ value greater than $15$, the estimated Pareto parameters are linear in $u$ for $u<-19$ dBm. Although in Fig.~\ref{fig:meanresplotdeclustering}, it is possible to attain linearity assumption of the mean excesses for $u<-19$ dBm with $r$ values less than $15$, the linearity is observed only for $r>15$ at $u<-19$ dBm in Fig.~\ref{fig:paramsvsth}. Therefore, $r$ value should be at least $15$. Additionally, $-19$ dBm is the minimum threshold for which the $R^2$ value of the regression model fitted to the estimated parameters is greater than $0.95$.
As a result, considering Figs.~\ref{fig:meanresplotdeclustering} and~\ref{fig:paramsvsth} together, the optimum values of $u$ and $r$ are $16$ and $-19$ dBm, respectively. Also, for these optimum values, the corresponding shape and modified scale parameters are $0.108$ and $2.06$, respectively.

\begin{figure}[ht]
\centering
\captionsetup[subfigure]{labelformat=empty}
     \begin{center}

        \subfloat[(a)]{%
            \label{fig:scalevsth}
            \includegraphics[width=.7\linewidth]{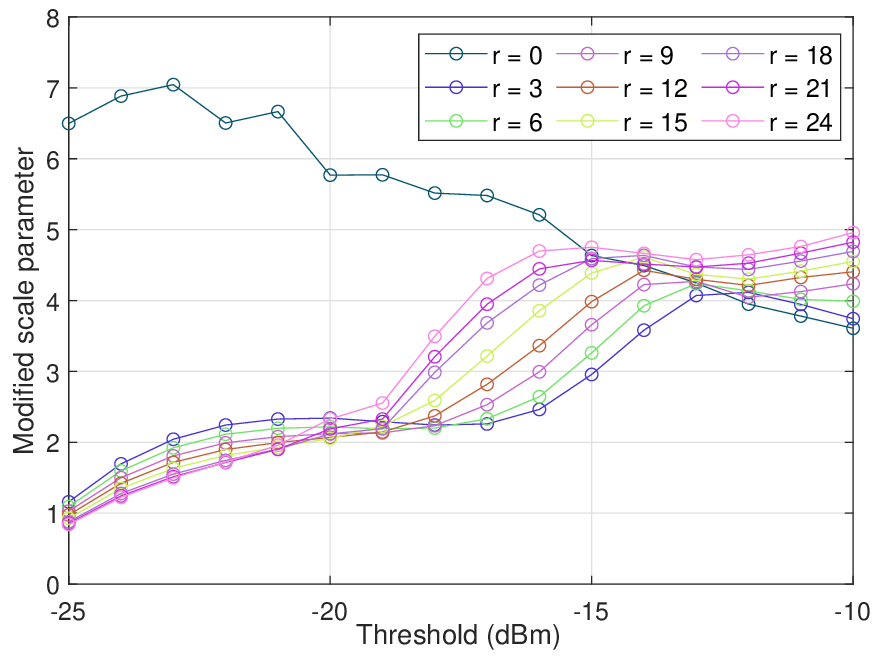}
        }\\%
        \subfloat[(b)]{%
            \label{fig:shapevsth}
            \includegraphics[width=.7\linewidth]{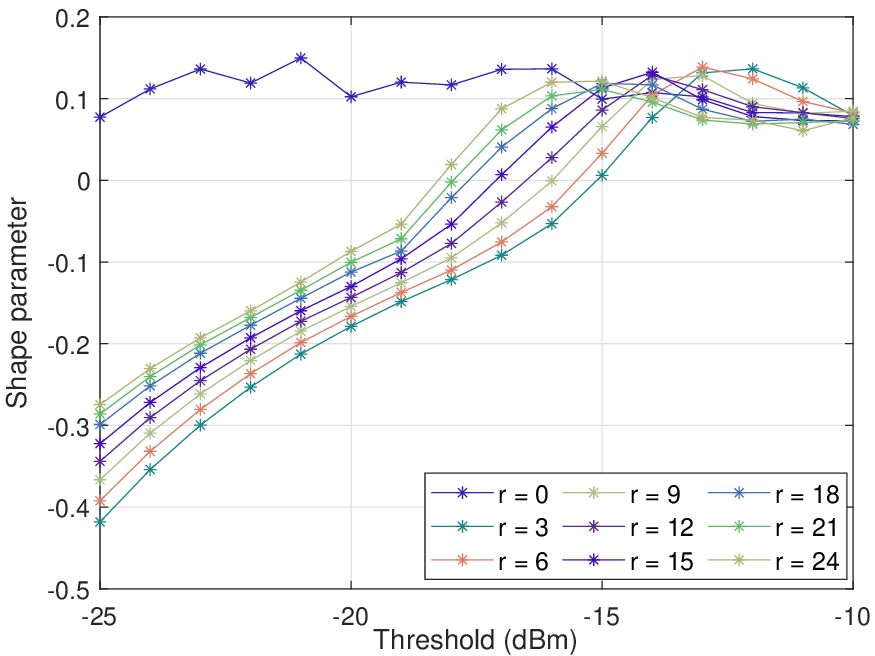}
        }\\
    \end{center}
    %\vspace{-0.3cm}
    \caption{Shape and modified scale parameters of GPD at different thresholds and minimum gaps between the clusters for declustering method: (a) Modified scale parameter, and (b) Shape parameter.}%
  \label{fig:paramsvsth}
\end{figure}

Fig.~\ref{fig:probplotsdeclustering} shows the probability plots for the declustering approach. This figure is used to validate the performance of the Pareto model corresponding to the threshold value evoked from Figs.~\ref{fig:meanresplotdeclustering} and \ref{fig:paramsvsth}. The black line is the diagonal line for diagnosing the goodness of fit of the GPD to the empirical values.
The black lines have both slope $1$, but intersections $0$ and $-55$ for the PP plot and QQ plot, respectively. Both PP and QQ plots reveal the robust performance of the generalized Pareto model applied on the output of the declustering approach.
Similar to Fig.~\ref{fig:ppplot}, Fig.~\ref{fig:ppplotdeclustering} shows that the values modeled by GPD for the declustering approach are almost identical to that of the empirical values. However, as opposed to Fig.~\ref{fig:qqplot}, Fig.~\ref{fig:qqplotdeclustering} shows that the QQ plot of the declustering approach fits to the diagonal line for all the quantile values. Therefore, the Pareto model obtained by the declustering approach models the empirical values with a much better performance than that of the ARIMA-GARCH approach. 
\begin{figure}[ht]
\centering
\captionsetup[subfigure]{labelformat=empty}
     \begin{center}
        \subfloat[(a)]{%
            \label{fig:ppplotdeclustering}
            \includegraphics[width=.7\linewidth]{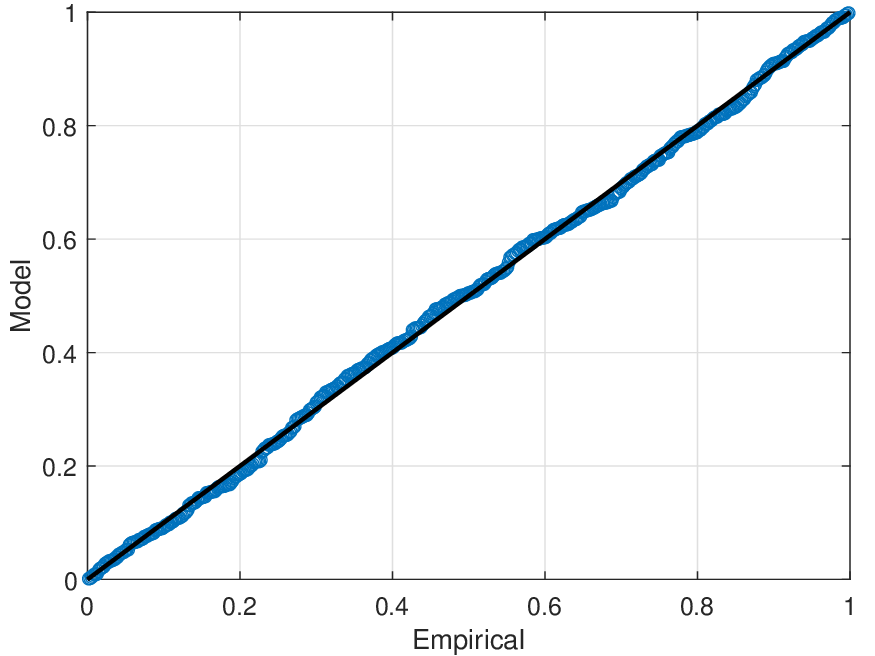}
        }\\%
        \subfloat[(b)]{%
            \label{fig:qqplotdeclustering}
            \includegraphics[width=.7\linewidth]{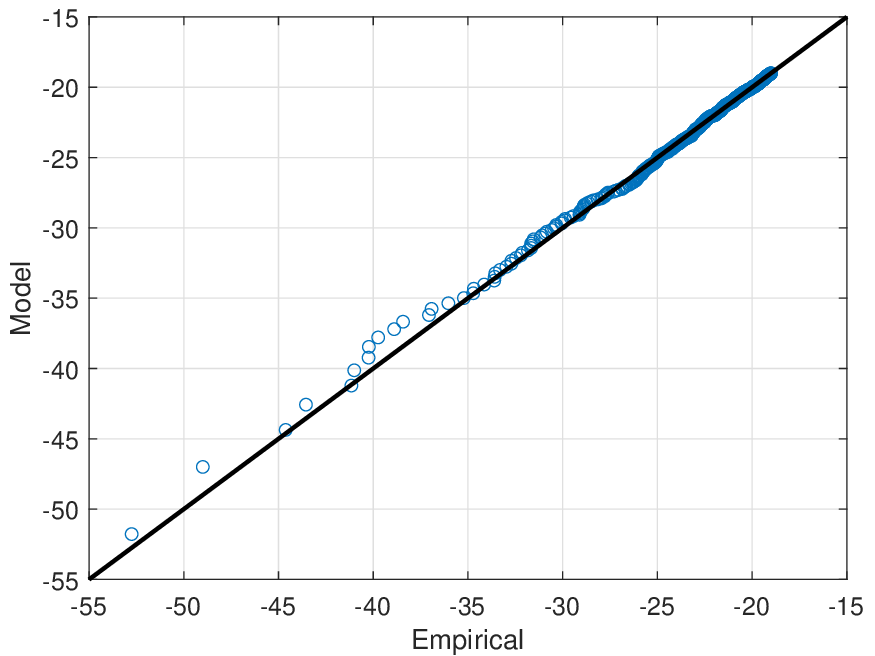}
        }\\
    \end{center}
    %\vspace{-0.5cm}
    \caption{Probability plots for declustering method: (a) PP plot, and (b) QQ plot. The black lines are diagonal lines assessing the goodness of fit.}%
   \label{fig:probplotsdeclustering}
\end{figure}

%\vspace{-0.7cm}
\subsection{Minimum Required Number of Samples for Channel Tail Estimation}
\label{sec:PerformanceSampleNumber}
Fig.~\ref{fig:samplesize} shows the p-value of the AD normality test, as explained in detail in Section~\ref{sec:sample size}, for different number of samples. This figure is used to determine the minimum number of samples required to model the channel tail statistics. 
The horizontal line is at the critical value $\alpha = 0.05$, and the vertical line is at sample size equal to $2 \times 10^5$. For the sample size greater than $2 \times 10^5$, all the $p$-values are above the horizontal line, indicating that the sufficient number of samples required to estimate the tail statistics is $2 \times 10^5$. The gain we attain by increasing the sample size from $2 \times 10^5$ to $9 \times 10^{5}$, is only $0.08$ in terms of the standard deviation of the $p$-values. As a result, although it is required to capture $10^6$ samples or more for modeling the events occurring once in million for URLLC, MSSD allows the estimation of the tail statistics with lower number of samples.
%\begin{figure}[t]
%    \centering
%    \includegraphics[width=0.4\columnwidth]{samplesize.eps}
%    \caption{$p$-value of AD test for different sample numbers. The vertical and horizontal black lines indicate the minimum required sample number and the critical value $\alpha$, respectively.}
%    \label{fig:samplesize}
%\end{figure}

%\vspace{-0.2cm}
\subsection{Comparison with Conventional Tail Estimation Method}
\label{sec:PerformanceComparison}
Fig.~\ref{fig:comparison} shows the CDF of the  normalized power for the empirical model, conventional extrapolation-based method, and the proposed Pareto model. 
To apply our proposed methodology, we use Pareto distribution to estimate the upper and lower tails of the channel, denoted by \textit{Pareto Lower Tail} and \textit{Pareto Upper Tail}, respectively, while the kernel-smoothing function in \textsc{MATLAB} is used to estimate the CDF of the interior values between the lower and upper tail probabilities, denoted by \textit{Kernel Smoothed Interior} in the figure.
The optimum threshold for the lower tail has been determined in Section~\ref{sec:performance-declustering} as $-19$ dBm by considering $10^6$ samples. It is worth mentioning that, for the sake of simplicity and without loss of generality, we purposefully ignore threshold modeling for the upper tail as it is not the aim of this study, thus, the same sample portion in the upper tail is used as that of the lower tail. To obtain the extrapolated Weibull curve, we extract the observations corresponding to the reliability order of $10^{-3}-10^{0}$ PER, i.e. the observations whose probability of occurrences are between $1/{10^{3}}$ and $1/{10^{0}}$. Upon applying different distributions to these samples, we select the best-fitting distribution as the Weibull distribution according to the Akaike information criterion/Bayesian information criterion (AIC/BIC) metric, and then compute the extrapolated tail distribution. In addition, we extract first $10^3$ observations regardless of their probability of occurrences, and then apply different distributions to obtain the best fitted one. In contrast to the aforementioned Weibull case, this time, the Rician distribution is found to be the best fitting distribution to the first $10^3$ observations according to the AIC/BIC metric. Then, to obtain the extrapolated Rician curve in Fig.~\ref{fig:comparison}, we compute the corresponding extrapolated tail.
This figure reveals that the traditional extrapolation-based approach strongly depends on the portion of the data used for the modeling, resulting in the over-estimation or under-estimation of the empirical results.
The proposed method has been demonstrated to perform significantly better than the conventional extrapolation-based approach, especially in the ultra-reliability region. The proposed model outperforms the extrapolated Rician and Weibull models by $0.01$ and $1.94 \times 10^{-4}$ in terms of RMSE, respectively, which is a significant improvement for the ultra-reliability region, where the reliability orders are in the range of $10^{-9}-10^{-5}$.

\begin{figure}[h]
   \begin{minipage}{0.48\textwidth}
     \centering
     \includegraphics[width=.7\linewidth]{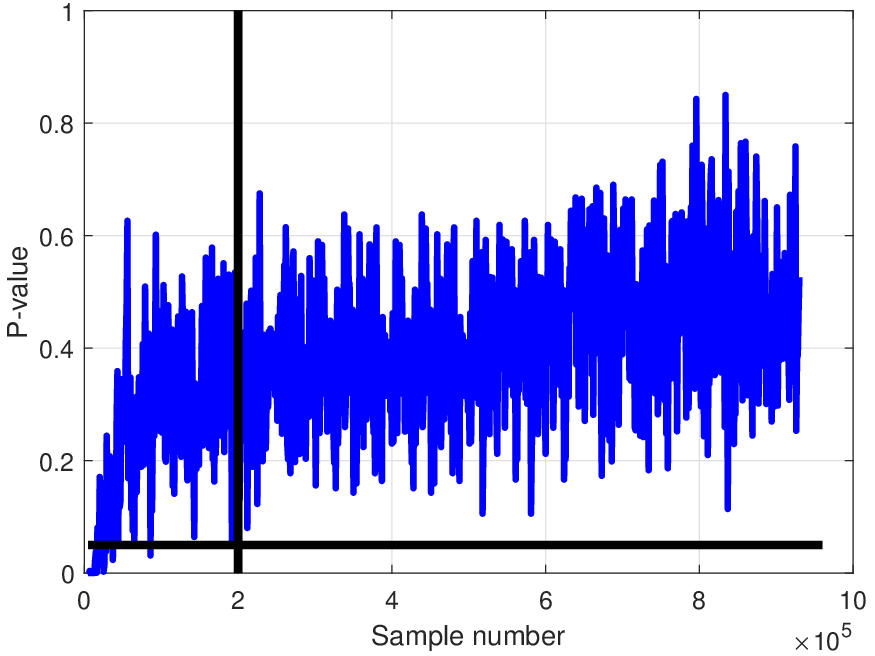}
     %\vspace{-0.2cm}
     \caption{$p$-value of AD test for different sample numbers. The vertical and horizontal black lines indicate the minimum required number of samples and the critical value $\alpha$, respectively.}\label{fig:samplesize}
   \end{minipage}\hfill
   \vspace{0.3cm}
   \begin{minipage}{0.48\textwidth}
     \centering
     \includegraphics[width=.7\linewidth]{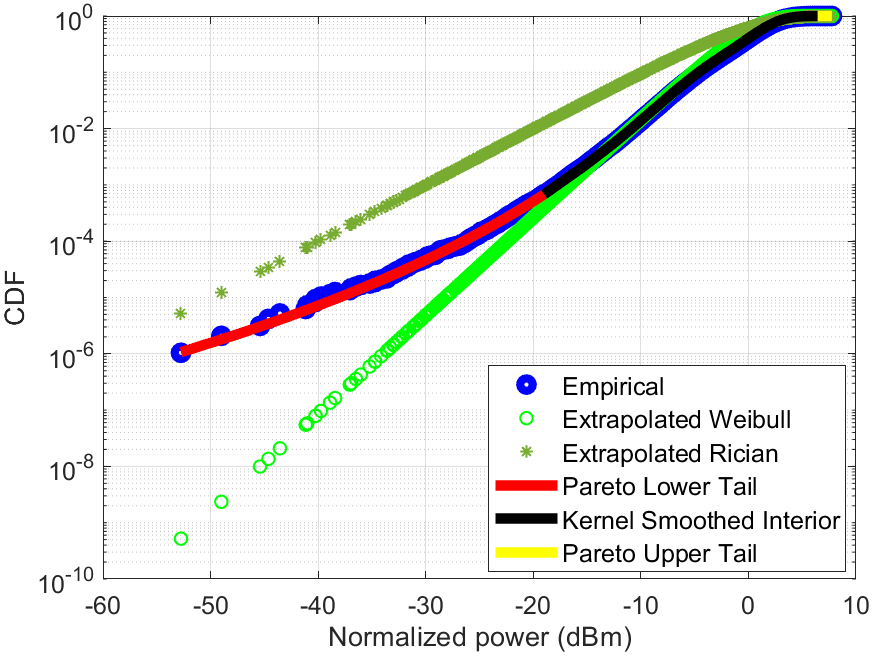}
     \caption{CDF of the normalized power for empirical, extrapolation-based, and Pareto model.}\label{fig:comparison}
   \end{minipage}
\end{figure}
%\begin{figure}[ht]
%    \centering
%    \includegraphics[width=0.4\columnwidth]{comparisonWeibullRician.eps}
%    \caption{CDF of the normalized power for empirical, extrapolation-based, and Pareto model.}
%    \label{fig:comparison}
%\end{figure}

\section{Conclusions}
\label{sec:conclusions}
In this paper, we introduce a novel framework based on the extreme value theory with the goal of estimating the channel tail statistics for URLLC. 
The proposed methodology adopts EVT to determine the optimum threshold over which the tail statistics are derived, the stopping condition to specify the minimum number of samples required to model the tail characteristics, and the assessment to validate the final model by using probability plots. 
The proposed algorithm for determining the optimal stopping condition provides a significantly lower number of required samples for estimating the channel tail distribution. Moreover, the proposed channel modeling methodology achieves  a significantly better fit to the empirical data in the lower tail than the conventional extrapolation-based approach, especially in the ultra-reliability region. Furthermore, the conventional approach has been demonstrated to be highly dependent on the portion of the data considered in the extrapolation, which can seriously affect the reliability performance. 
In the future, we are planning to extend the proposed framework for the EVT analysis of the non-stationary processes to investigate the variation of the statistics of the extreme events over time \cite{mehrnia2021}, and point process approach to calculate the expected waiting time until the next extreme event happens, given the probability of occurrence of the event. Additionally, we are planning to include a more extensive set of parameters that affect the extreme events, such as vehicle speed, road material, and temperature. Also, the extension of this work for assessing the performance of the proposed algorithm at lower outage probabilities as well as deriving the mathematical expressions of BER and PER due to the outage is subject to future work.
\balance 

\ifCLASSOPTIONcaptionsoff
  \newpage
\fi

\bibliographystyle{ieeetr}
\bibliography{EVTChannel.bib}

\begin{IEEEbiography}[{\includegraphics[width=1in,height=1.25in,clip,keepaspectratio]{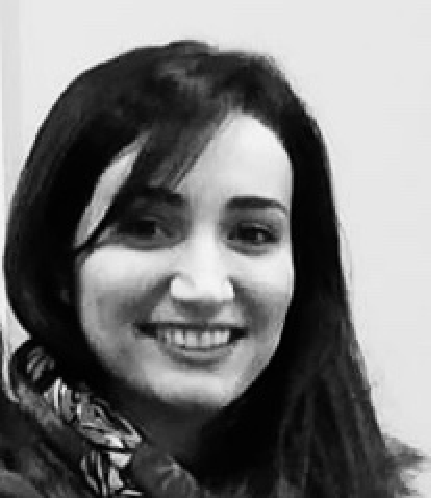}}]%
{Niloofar Mehrnia}
received the B.S. degree in biomedical engineering and M.S. degree in electrical and electronics engineering from Isfahan university and Istanbul Sehir university in 2014 and 2017, respectively. She is currently a Ph.D. student in electrical and electronics engineering at Koc University working in the Wireless Networks Laboratory under the supervision of Prof. Sinem Coleri. Since 2019, she is also working at Koc university Ford Otosan Automotive Technologies Laboratory (KUFOTAL) as a research assistant. Her research interests are wireless channel modeling, ultra-reliable and low-latency communications, and inter-vehicular communications.
\end{IEEEbiography}

%\begin{IEEEbiography}{Niloofar Mehrnia} received the B.S. in biomedical engineering and M.S. in electrical and electronics engineering at Isfahan university and Istanbul Sehir university in 2014 and 2017, respectivelty. She is currently a Ph.D. student in electrical and electronics engineering at Koc University working in the Wireless Networks Laboratory under the supervision of Prof. Sinem Coleri. Since 2019, she is working at Koc university Ford Otosan Automotive Technologies Laboratory (KUFOTAL) as a research assistant. Her research interests are wireless channel modeling, ultra-reliable and low-latency communications, and inter-vehicular communications.
%\end{IEEEbiography}

% if you will not have a photo at all:
\begin{IEEEbiography}[{\includegraphics[width=1in,height=1.25in,clip,keepaspectratio]{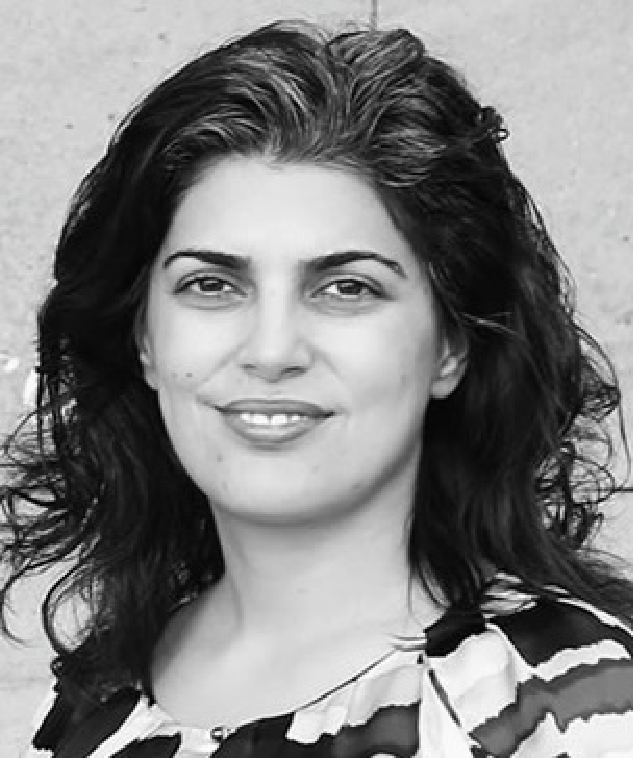}}]{Sinem Coleri}
%Biography text here.
received the BS degree in electrical and electronics engineering from Bilkent University in 2000, the M.S. and Ph.D. degrees in electrical engineering and computer sciences from University of California Berkeley in 2002 and 2005. She worked as a research scientist in Wireless Sensor Networks Berkeley Lab under sponsorship of Pirelli and Telecom Italia from 2006 to 2009. Since September 2009, she has been a faculty member in the department of Electrical and Electronics Engineering at Koc University, where she is currently Professor. Her research interests are in wireless communications and networking with applications in cyber-physical systems, machine-to-machine communication, sensor networks and intelligent transportation systems. 

She received TUBITAK (The Scientific and Technological Research Council of Turkey) Incentive Award and IEEE Vehicular Technology Society 2020 Neal Shepherd Memorial Best Propagation Paper Award in 2020, College of Engineering Outstanding Faculty Award at Koc University and IEEE Communications Letters Exemplary Editor Award as Area Editor in 2019, Outstanding Achievement Award by Higher Education Council and Academician of the Year Award by ANTIKAD (Antalya Businesswoman Association) in 2018, IEEE Communications Letters Exemplary Editor Award and METU- Prof. Dr. Mustafa Parlar Foundation Research Encouragement Award in 2017, IEEE Communications Letters Exemplary Editor Award and Science Heroes Association - Scientist of the Year Award in 2016, Turkish Academy of Sciences Distinguished Young Scientist (TUBA-GEBIP) and TAF (Turkish Academic Fellowship) Network - Outstanding Young Scientist Awards in 2015, Science Academy Young Scientist (BAGEP) Award in 2014, Turk Telekom Collaborative Research Award in 2011 and 2012, Marie Curie Reintegration Grant in 2010, Regents Fellowship from University of California Berkeley in 2000 and Bilkent University Full Scholarship from Bilkent University in 1995. She has been Area Editor of IEEE Communications Letters and IEEE Open Journal of the Communications Society since 2019, Editor of IEEE Transactions on Communications since 2017 and Editor of IEEE Transactions on Vehicular Technology since 2016. 
\end{IEEEbiography}

% insert where needed to balance the two columns on the last page with
% biographies
%\newpage

%\begin{IEEEbiographynophoto}{Jane Doe}
%Biography text here.
%\end{IEEEbiographynophoto}

\end{document}